\documentclass[
notitlepage
,floatfix,
aps,
prl,
reprint,
%onecolumn,
superscriptaddress,
%longbibliography
]{revtex4-1}
\usepackage[utf8]{inputenc}

\usepackage[caption=false]{subfig}
\usepackage{graphicx} 
\usepackage{epstopdf}
\usepackage{array}
\usepackage{verbatim}
\usepackage{amsmath,amsfonts,amssymb,amscd,mathtools}
\usepackage{amsthm}
\usepackage{tabularx}
\usepackage{stmaryrd}
\usepackage{enumerate}
\usepackage[ruled]{algorithm2e}
\usepackage{wasysym}
\usepackage{braket}
\usepackage{mathrsfs}
\usepackage{microtype}
\usepackage{hyperref}
\usepackage[dvipsnames]{xcolor}
\hypersetup{
    bookmarksnumbered=true, % If Acrobat bookmarks are requested, include section numbers
    unicode=false, % non-Latin characters in Acrobat bookmarks
    pdfstartview={FitH}, % fits the width of the page to the window
    pdftitle={}, % title
    pdfauthor={}, % author
    pdfsubject={}, % subject of the document
    pdfcreator={}, % creator of the document
    pdfproducer={}, % producer of the document
    pdfkeywords={}, % list of keywords
    pdfnewwindow=true, % links in new window
    colorlinks=true, % false: boxed links; true: colored links
    linkcolor=NavyBlue, % color of internal links
    citecolor=NavyBlue, % color of links to bibliography
    filecolor=NavyBlue, % color of file links
    urlcolor=NavyBlue % color of external links
}

\usepackage{newtxtext,newtxmath}

%\usepackage{txfonts}
% \usepackage{draftwatermark}
% \SetWatermarkText{DRAFT}
% \SetWatermarkScale{2}
% \SetWatermarkAngle{90}
% \SetWatermarkHorCenter{0.05\textwidth}

\theoremstyle{plain}
\newtheorem{thm}{Theorem}
\newtheorem{cor}[thm]{Corollary}
\newtheorem{lem}[thm]{Lemma}
\newtheorem{pro}[thm]{Proposition}

\theoremstyle{definition}

\newcommand{\eq}[1]{(\hyperref[eq:#1]{\ref*{eq:#1}})}

\renewcommand{\sec}[1]{\hyperref[sec:#1]{Section~\ref*{sec:#1}}}
\newcommand{\thrm}[1]{\hyperref[thrm:#1]{Theorem~\ref*{thrm:#1}}}
\newcommand{\lemm}[1]{\hyperref[lemm:#1]{Lemma~\ref*{lemm:#1}}}
\newcommand{\prop}[1]{\hyperref[prop:#1]{Proposition~\ref*{prop:#1}}}
\newcommand{\corr}[1]{\hyperref[corr:#1]{Corollary~\ref*{corr:#1}}}
\newcommand{\fig}[1]{\hyperref[fig:#1]{~\ref*{fig:#1}}}
\newcommand{\deff}[1]{\hyperref[deff:#1]{~\ref*{deff:#1}}}

\newcommand{\mA}{\mathcal{A}}

\newcommand{\mE}{\mathcal{E}}
\newcommand{\mN}{\mathcal{N}}

\newcommand{\mT}{\mathcal{T}}

\newcommand{\mF}{\mathcal{F}}
\newcommand{\mL}{\mathcal{L}}

\newcommand{\mM}{\mathcal{M}}
\newcommand{\mO}{\mathcal{O}}
\newcommand{\mP}{\mathcal{P}}

\newcommand{\mS}{\mathcal{S}}

\newcommand{\mbI}{\mathbb{I}}

\newcommand{\mfD}{\mathfrak{D}}
\newcommand{\mfF}{\mathfrak{F}}

\DeclareMathAlphabet{\matheu}{U}{eus}{m}{n}

\DeclareMathOperator{\Tr}{Tr}
\DeclareMathOperator{\id}{id}

\newcommand{\ketbra}[2]{|{#1}\rangle\!\langle{#2}|}

\newcommand{\ba}{\begin{eqnarray}}
\newcommand{\ea}{\end{eqnarray}}
\newcommand{\bann}{\begin{eqnarray*}}
\newcommand{\eann}{\end{eqnarray*}}
\newcommand{\bal}{\begin{equation}\begin{aligned}}
\newcommand{\eal}{\end{aligned}\end{equation}}
\newcommand{\dm}[1]{\ketbra{#1}{#1}}

%tabularx package customization
\newcolumntype{L}[1]{>{\raggedright}p{#1}}
\newcolumntype{C}[1]{>{\centering}p{#1}}
\newcolumntype{R}[1]{>{\raggedleft}p{#1}}
\newcolumntype{D}{>{\centering\arraybackslash}X}

%%% END Article customizations

% \definecolor{darkgreen}{rgb}{0,0.5,0}
% \newcommand{\tjyoder}[1]{{\color{darkgreen}{[{\bf Ted:} #1]}}}

%
%

\newcommand{\sbar}{\;\rule{0pt}{9.5pt}\right|\;}

\newcommand{\lset}{\left\{\left.}
\newcommand{\rset}{\right\}}

%%% Since we need to refer the Appendix, we have to number each section in the Appendix
%%% Reference: https://tex.stackexchange.com/questions/488302/referencing-an-appendix-while-using-prl-style
\setcounter{secnumdepth}{2}

%%%%%%%%%%%%%%%%%%%%%%%%%%%%%%%%%%%%%%%%%%%%%%%%%%%%%%%%%%%%%%%%%%%%%%%%%%%%%%%%%%%%%%%%%%%%%%%%%%%%
%%%%%%%%%%%%%%%%%%%%%%%%%%%%%%%%%%%%%%%%%%%%%%%%%%%%%%%%%%%%%%%%%%%%%%%%%%%%%%%%%%%%%%%%%%%%%%%%%%%%
%%%%%%%%%%%%%%%%%%%%%%%%%%%%%%%%%%%%%%%%%%%%%%%%%%%%%%%%%%%%%%%%%%%%%%%%%%%%%%%%%%%%%%%%%%%%%%%%%%%%
%%%%%%%%%%%%%%%%%%%%%%%%%%%%%%%%%%%%%%%%%%%%%%%%%%%%%%%%%%%%%%%%%%%%%%%%%%%%%%%%%%%%%%%%%%%%%%%%%%%%
%%%%%%%%%%%%%%%%%%%%%%%%%%%%%%%%%%%%%%%%%%%%%%%%%%%%%%%%%%%%%%%%%%%%%%%%%%%%%%%%%%%%%%%%%%%%%%%%%%%%
%%%%%%%%%%%%%%%%%%%%%%%%%%%%%%%%%%%%%%%%%%%%%%%%%%%%%%%%%%%%%%%%%%%%%%%%%%%%%%%%%%%%%%%%%%%%%%%%%%%%
%%%%%%%%%%%%%%%%%%%%%%%%%%%%%%%%%%%%%%%%%%%%%%%%%%%%%%%%%%%%%%%%%%%%%%%%%%%%%%%%%%%%%%%%%%%%%%%%%%%%
%%%%%%%%%%%%%%%%%%%%%%%%%%%%%%%%%%%%%%%%%%%%%%%%%%%%%%%%%%%%%%%%%%%%%%%%%%%%%%%%%%%%%%%%%%%%%%%%%%%%

\begin{document}

\title{Application of the Resource Theory of Channels to Communication Scenarios}

\author{Ryuji Takagi}
\email{rtakagi@mit.edu}
\affiliation{Center for Theoretical Physics and Department of Physics, Massachusetts Institute of Technology, Cambridge, MA 02139, USA}

\author{Kun Wang}
\email{wangk36@sustech.edu.cn}
\affiliation{Shenzhen Institute for Quantum Science and Engineering, Southern University of Science and Technology, Shenzhen 518055, China}
\affiliation{Center for Quantum Computing, Peng Cheng Laboratory, Shenzhen 518000, China}

\author{Masahito Hayashi}
\email{masahito@math.nagoya-u.ac.jp}
\affiliation{Graduate School of Mathematics, Nagoya University, Nagoya, 464-8602, Japan}
\affiliation{Shenzhen Institute for Quantum Science and Engineering, Southern University of Science and Technology, Shenzhen 518055, China}
\affiliation{Centre for Quantum Technologies, National University of Singapore, 3 Science Drive 2, 117542, Singapore}
\affiliation{Center for Quantum Computing, Peng Cheng Laboratory, Shenzhen 518000, China}

\begin{abstract}
 We introduce a resource theory of channels relevant to communication via quantum channels, in which the set of constant channels --- useless channels for communication tasks --- is considered as the free resource. We find that our theory with such a simple structure is useful to address central problems in quantum Shannon theory --- in particular, we provide a converse bound for the one-shot non-signalling assisted classical capacity that naturally leads to its strong converse property, as well as obtain the one-shot channel simulation cost with non-signalling assistance. We clarify an intimate connection between the non-signalling assistance and our formalism by identifying the non-signalling assisted channel coding with the channel transformation under the maximal set of resource non-generating superchannels, providing a physical characterization of the latter. Our results provide new perspectives and concise arguments to those problems, connecting the recently developed fields of resource theories to `classic' settings in quantum information theory and shedding light on the validity of resource theories of channels as effective tools to address practical problems.  
\end{abstract}

\maketitle

\textit{\textbf{Introduction.}}
---
A central problem addressed in quantum Shannon theory is to understand how much of the resources are required to accomplish the desired communication tasks and how one can efficiently use them. 
% The resources take various forms depending on the given physical situation and types of information one tries to send --- when transmission of classical bits is in question, classical communication channels are clearly costly resources, while when quantum information is to be communicated, classical communication is often considered as the freely accessible resource.
In general, the idea of distinguishing costly resources and free resources is helpful for articulating the problem to address, and it has been employed in a series of works in quantum information theory.

Once the precious resources are identified for the given setting, one is naturally motivated to consider quantifying and manipulating them in an appropriate manner, which leads to a general framework called \textit{resource theories}~\cite{Chitambar2019resource}. 
The resource theoretic framework has been applied to various kinds of quantities~\cite{Horodecki2009entanglement,Baumgratz2014,Gour2008reference,Brandao2013thermo,gallego_2015,Veitch_2014stab,Takagi2018convex} and has been employed to extract common features shared by a wide class of resources~\cite{brandao_2015,Liu2017,gour_2017,anshu_2017,regula_2018,Takagi2019operational,li_2018,Takagi2019general,Uola2019conic,Liu2019oneshot,Vijayan2019oneshot,Takagi2019universal,Regula2019characterizing}.
Recently, the framework has been extended beyond the consideration of static resources contained in quantum states to dynamic resources attributed to quantum measurements and channels, and it is currently under active investigation~\cite{guerini_2017,skrzypczyk2018robustness,Oszmaniec2019operational,Uola2019conic,Guff2019measurement,Pirandola2017fundamental,bendana_2017,gour_2018-1,wilde_2018,rosset_2018,Diaz2018usingreusing,theurer_2018,li_2018,Zhuang2018non-Gaussian,seddon2019quantifying,Wang2019quantifying,Takagi2019general,Liu2019channel,Liu2019yuan,Bauml2019channel,Gour2019bipartite}.

Although the idea of resource theory has succeeded to provide much insight into the properties of the interested quantities, a common criticism is that the discussion often ends up with a formalistic level not solving other existing problems, apart from a few attempts along this line for several resource theories of states~\cite{howard_2017,Yunger2018isomer,Takagi2018skew,Fang2019nogo}.
In particular, it has been elusive whether the resource theory of channels would be helpful for answering concrete problems at all.

In this work, we take the first step in this direction. We introduce the \textit{resource theory of communication}, a resource theory of channels relevant to communication via quantum channels.  
Unlike many resource theories of channels with underlying state theories~\cite{bendana_2017,wilde_2018,Zhuang2018non-Gaussian,theurer_2018,Diaz2018usingreusing,li_2018,seddon2019quantifying,Wang2019quantifying,Pirandola2017fundamental}, our setting is not equipped with any theory of states, making the consideration of the resource theories of channels crucial.  
We consider the \textit{generalized robustness of communication} and another related quantity, \textit{max-relative entropy of communication}, as resource quantifiers and provide them with an operational meaning in terms of state discrimination task. 
We set the maximal set of superchannels that do not create resourceful channels as free operations and show that they coincide with the non-signalling assisted channel transformations, which suggests an intimate connection between our framework and the non-signalling assisted communication. 
With this formalism, we address two important problems in quantum Shannon theory: strong converse property and channel simulation cost under resource assistance.

One of the most fundamental properties of quantum channels is the communication capacity, and it is said to have the strong converse property if the error rate necessarily approaches unit whenever the transmission rate exceeds its capacity~\cite{Arimoto1973converse,Verdu1994general,Ogawa1999converse,Winter1999converse,Hayashi2002general,Datta2013smooth,Konig2009strong,Bardhan2015optical,Wilde2014strong,Bennett2014reverse,Berta2011reverse,Gupta2015strong,Fang2018maxinfov2,Morgan2014pretty,Wang2019converse}.
The strong converse property sets a fundamental limitation on information transmission --- one can never hope to succeed in sending information with the rates exceeding the capacity even allowing for a finite error --- and thus provides a sharp notion to the capacity as a phase transition point in the information transmission.
While entanglement assisted channels admit wider choices of encoding than the conventional classical-quantum channels, making the strong converse property appear to be even more subtle and surprising, the strong converse property for the entanglement assisted capacity has been shown by an operational argument by flipping the quantum reverse Shannon theorem~\cite{Bennett2014reverse,Berta2011reverse}. 
Later, a more direct proof of the strong converse property for the entanglement assisted capacity without using the above operational argument but by directly evaluating the error at the rates above the capacity has been shown in Ref.~\cite{Gupta2015strong}, for which several involved techniques were employed.
Here, combining our framework with recent progress in operational characterization of resource theories in terms of discrimination tasks~\cite{Takagi2019operational,Takagi2019general,Uola2019conic,Oszmaniec2019operational}, we provide a concise proof by showing an even stronger claim: the strong converse property for non-signalling assisted capacity, which includes entanglement assisted communication as a special case.
We note that the quantum reverse Shannon theorem under non-signalling assistance~\cite{Fang2018maxinfov2} allows one to prove the strong converse property by an operational argument that is analogous to the one used for the entanglement assisted case~\cite{Bennett2014reverse}.
Thus, our proof can be seen as an alternative proof in a more direct approach in a similar sense of Ref.~\cite{Gupta2015strong}.
Our main result is a converse bound for the one-shot non-signalling assisted classical capacity that immediately leads to the strong converse property together with the recently shown asymptotic equipartition property of max-information of channels~\cite{Fang2018maxinfov2}. (See also Refs.~\cite{Leung2015nonsignal,Duan2016nosignalling,Wang2018converse,Wang2019converse} for related works on the non-signalling assisted capacity.) 

Channel simulation is a reverse task of noiseless communication via noisy channels, in which one is to implement the given noisy channel from the noiseless channel using accessible free resources~\cite{Bennett2014reverse,Berta2011reverse,Fang2018maxinfov2,Berta2013cost,wilde_2018,Pirandola2017fundamental}.
Recently, the one-shot channel simulation cost with non-signalling assistance has been obtained by techniques based on the semidefinite programming~\cite{Fang2018maxinfov2}.
Here, we propose a new approach --- we obtain the one-shot non-signalling assisted simulation cost by casting it as a resource dilution problem in our framework, which is a subject well studied in the context of resource theories~\cite{Brandao2011oneshot,Zhao2018oneshot,Liu2019oneshot,Yuan2019memory}.

Our results provide new perspective to the fundamental problems in quantum Shannon theory while lifting the resource theory of channels to effective tools to address concrete problems, and the arguments employed here are expected to be extendable to more generic settings thanks to the systematic nature of the resource-theoretic frameworks.

%%%%%%%%%%%%%%%%%%%%%%%%%%%%%%%%%%%%%%%%%%%%%%%%
\textit{\textbf{Resource theory of communication.}}
---
Let $\mT(A,B)$ be the set of quantum channels with input system $A$ and output system $B$, while we omit the specification of input/output systems when it is clear from the context.  
% Here, we introduce the set of free channels and resource quantifiers that construct a resource theory of channels relevant to communication settings.  
We would like to quantitatively understand how useful the given channel is for communication tasks, and a reasonable as well as operationally motivated approach for this purpose is to take ``useless'' channels for communication as the set of free channels. 
For classical communication over quantum channels, a natural choice for useless channels are constant channels~\cite{Cooney2015strong,Berta2018amortized}, which map any state to some fixed state. 
Namely, we choose the set of free channels as  
\ba
 \mfF:=\lset \Xi\in\mT \sbar \exists \sigma\mbox{ s.t. }\Xi(\rho)=\sigma,\ \forall \rho \rset.
\ea
It is clear that constant channels are not able to transmit any information and thus suitable for our choice of free resources --- indeed a channel has zero classical capacity if and only if it is a constant channel. In other words, choosing a different set of free channels will allow for free communication, which is not appropriate to our setting.

Our goal is to gain ideas of usefulness of a given channel in this framework, which motivates us to quantify the resourcefulness of channels with respect to the set of constant channels.
To this end, we consider robustness of communication, the generalized robustness measure~\cite{Steiner2003generalized,Napoli2016robustness,regula_2018,Diaz2018usingreusing,Takagi2019operational,Takagi2019general,Liu2019channel} with respect to the set of constant channels defined for any channel $\mN$ as \footnote{We remark that \eqref{eq:robustness def} is always well-defined because the set of Choi matrices corresponding to the constant channels include a full-rank operator.}
 \ba
  R(\mN):= &\min\lset r\geq 0 \sbar \frac{\mN+r\mL}{1+r}\in \mfF,\ \mL\in\mT\rset,
 \label{eq:robustness def}
 \ea
which was introduced in Ref.~\cite{fang2018thesis}. 
It is also convenient to consider the max-relative entropy of communication:
\ba
 \mfD_{\max}(\mN):= &\min\lset s\sbar \mN\leq 2^s\mL,\ \mL\in\mfF\rset,
 \label{eq:Dmax def}
\ea
where the inequality is in terms of complete positiveness. 
Then, it is straightforward to see the relation $\mfD_{\max}(\mN)=\log(1+R(\mN))$. 
We discuss their properties in detail in Appendix~\ref{sec:app property}.

Besides the quantification of resources, another central theme that resource theories deal with is manipulation of resources. 
Since our resource objects are quantum channels, it is natural to consider channel transformations under superchannels~\cite{Chiribella2008architecture,Chiribella_2008transforming}. 
Let $\mS(\{A,B\},\{A',B'\})$ be the set of superchannels that map channels in $\mT(A,B)$ to channels in $\mT(A',B')$.
Of particular interest are channel transformations under ``free'' superchannels. 
The requirement for free superchannels is that they do not create resourceful channels out of free channels. 
Within this constraint, there is still much freedom in what additional constraints one should impose~\cite{Liu2019channel,Gour2019bipartite}.  
Here, we will take a similar approach to Ref.~\cite{Gour2019bipartite}, considering the maximal set of free superchannels (often called ``resource non-generating'') defined as~\footnote{Ref.~\cite{Gour2019bipartite} chooses the \textit{completely} resource non-generating superchannels, which do not create any resource even by applications on partial systems that free channels act on, as free superchannels. In our case where $\mfF$ is the set of constant channels, the definition in \eqref{eq:free superchannel condition} coincides with the set of completely resource non-generating superchannels.}
\ba
\mO_\mfF:=\lset \Theta\in\mS \sbar \Theta[\Xi]\in \mfF,\ \forall \Xi\in\mfF\rset  
\label{eq:free superchannel condition}. 
\ea

Although this type of ``maximal'' choice of free operations is motivated by a mathematical convenience and usually does not have a good characterization (e.g. separability preserving operations for bipartite entanglement~\cite{Brandao2011oneshot}, maximally incoherent operations for coherence~\cite{Aberg2006quantifying}), it turns out that in our case this choice of free superchannels exactly corresponds to the communication setting with non-signalling assistance, which connects the mathematical formulation of resource theory and communication tasks. (See Refs. \cite{DiVincenzo1999entanglement,Chitambar2016assisted,Morris2018assited} for
other resource assisted tasks considered in different settings.)
More precisely, consider a channel transformation by a non-signalling bipartite channel $\Pi_{\rm NS}:A_i B_i\rightarrow A_oB_o$ converting channel $\mN\in\mT(A_o,B_i)$ to another channel $\mN'\in\mT(A_i,B_o)$ where $\Pi_{\rm NS}$ satisfies the non-signalling conditions~\cite{Duan2016nosignalling}
\ba
\Tr_{A_o}\Pi_{\rm NS}(\rho_{A_i}^{(0)}\otimes\rho_{B_i})=\Tr_{A_o}\Pi_{\rm NS}(\rho_{A_i}^{(1)}\otimes\rho_{B_i}) \label{eq:AtoB non-signalling}\\
\Tr_{B_o}\Pi_{\rm NS}(\rho_{A_i}\otimes\rho_{B_i}^{(0)})=\Tr_{B_o}\Pi_{\rm NS}(\rho_{A_i}\otimes\rho_{B_i}^{(1)}) \label{eq:BtoA non-signalling}
\ea
for any state $\rho_{A_i}$, $\rho_{B_i}$, and any pair of states $\{\rho_{A_i}^{(j)}\}_{j=0}^1$, $\{\rho_{B_i}^{(j)}\}_{j=0}^1$.
The subscript $i(o)$ in $A_{i(o)}$, $B_{i(o)}$ indicates that the systems are input (output) systems of the bipartite channel $\Pi_{\rm NS}$.
Eq.~\eqref{eq:BtoA non-signalling} ensures that the bipartite operation $\Pi_{\rm NS}$ is ``semicausal'' from $A$ to $B$~\cite{Beckman2001causal}, which is shown to be ``semilocalizable''~\cite{Eggeling2002semilocalizable} and
constructs a ``quantum comb'' with a causal order~\cite{Chiribella_2008transforming,Chiribella2008architecture} as shown in Fig.~\ref{fig:NS_assisted}.

\begin{figure}[htbp]
    \centering
    \includegraphics[scale=0.3]{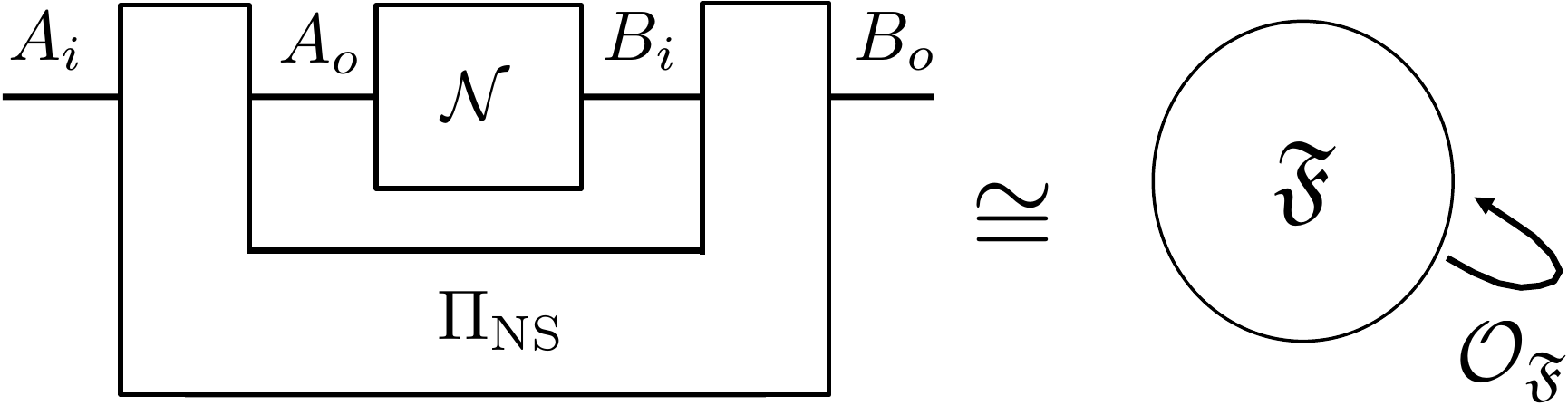}
    \caption{Non-signalling bipartite channel $\Pi_{\rm NS}$ constructs a quantum comb (left). Proposition \ref{pro:FS = NS} shows that the non-signalling assisted channel transformation is equivalent to the maximal set of superchannels that preserves the set of constant channels (right).}
    \label{fig:NS_assisted}
\end{figure}

Let $\mO_{\rm NS}$ be the set of superchannels realized by non-signalling channels; i.e. $\mO_{\rm NS}:=\{\Theta\in\mS\,|\,\Theta[\mN]=\Pi_{\rm NS}\circ \mN,\ \forall \mN\in\mT\}$ where $\Pi_{\rm NS}$ is a non-signalling channel satisfying \eqref{eq:AtoB non-signalling} and \eqref{eq:BtoA non-signalling} and the concatenation ($\circ$) refers to the comb structure in Fig.~\ref{fig:NS_assisted}.
Then, we have the following identification between two sets of channel transformations. (Proof in Appendix~\ref{sec:FS=NS}.) 
\begin{pro} \label{pro:FS = NS}
The set of resource non-generating free superchannels coincides with that of non-signalling assisted channel transformations, i.e. $\mO_\mfF=\mO_{\rm NS}$. 
\end{pro}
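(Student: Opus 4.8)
The plan is to prove the two inclusions $\mO_{\rm NS}\subseteq\mO_\mfF$ and $\mO_\mfF\subseteq\mO_{\rm NS}$ at once, by reducing both to a single equivalence concerning the extra non-signalling condition. Recall that, by the comb characterization of superchannels~\cite{Chiribella2008architecture,Chiribella_2008transforming}, \emph{every} superchannel $\Theta\in\mS$ is realized by linking its input channel into a bipartite channel $\Pi:A_iB_i\to A_oB_o$ that is semicausal from $A$ to $B$, i.e. that already satisfies \eqref{eq:BtoA non-signalling}. Consequently both $\mO_\mfF$ and $\mO_{\rm NS}$ consist of superchannels of this form, and the two sets can differ only in whether the remaining condition \eqref{eq:AtoB non-signalling} is additionally imposed. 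It therefore suffices to show that, for a bipartite channel $\Pi$ already obeying \eqref{eq:BtoA non-signalling}, the induced superchannel maps every constant channel to a constant channel if and only if $\Pi$ also obeys \eqref{eq:AtoB non-signalling}.

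The first and most important step is to compute the action of such a superchannel on a constant channel. Let $\Xi_\sigma\in\mfF$ denote the constant channel $\rho\mapsto\sigma$, whose Choi operator is $I_{A_o}\otimes\sigma_{B_i}$. Inserting $\Xi_\sigma$ into the comb amounts, in the link-product picture, to feeding the input into $A_i$, discarding the wire $A_o$ that the inserted channel would have consumed, and freshly preparing $\sigma$ on $B_i$. Carrying out the link product (equivalently, contracting the Choi operator of $\Pi$ against $I_{A_o}\otimes\sigma_{B_i}$), I expect to obtain that $\Theta[\Xi_\sigma]$ is the channel $\rho\mapsto \Tr_{A_o}[\Pi(\rho_{A_i}\otimes\sigma_{B_i})]$ with output on $B_o$. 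This identity is the crux: the only dependence on the input $\rho$ that survives is through the marginal of $\Pi$ on $B_o$ obtained after tracing out $A_o$.

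Given this formula the equivalence is immediate. The channel $\Theta[\Xi_\sigma]$ is constant precisely when its output is independent of $\rho$, i.e. when $\Tr_{A_o}[\Pi(\rho^{(0)}_{A_i}\otimes\sigma_{B_i})]=\Tr_{A_o}[\Pi(\rho^{(1)}_{A_i}\otimes\sigma_{B_i})]$ for all $\rho^{(0)},\rho^{(1)}$, which is exactly \eqref{eq:AtoB non-signalling} specialized to $\rho_{B_i}=\sigma$. Demanding this for \emph{all} constant channels ranges $\sigma$ over all states; since density operators span the full operator space, linearity upgrades the requirement to hold for every $\rho_{B_i}$, recovering \eqref{eq:AtoB non-signalling} in full. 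Reading this chain in one direction gives $\mO_\mfF\subseteq\mO_{\rm NS}$ (preserving all constant channels forces the condition), and reading it backwards gives $\mO_{\rm NS}\subseteq\mO_\mfF$ (the condition forces preservation), completing the identification.

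The hard part will be the first step: establishing the linking formula cleanly. This requires fixing a convention for the Choi isomorphism and the link product and then verifying that inserting a constant channel really does collapse the comb loop into ``trace out $A_o$, prepare $\sigma$ on $B_i$.'' I would also take care that complete positivity and trace preservation are respected throughout, so that $\Theta[\Xi_\sigma]$ is a bona fide channel, and that the quantification over all $\sigma$ combined with linearity genuinely delivers the condition for arbitrary (not merely positive) inputs $\rho_{B_i}$, as stated in \eqref{eq:AtoB non-signalling}.
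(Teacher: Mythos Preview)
Your proposal is correct and follows essentially the same route as the paper. Both arguments hinge on the identity $\Theta[\Xi_\sigma](\rho)=\Tr_{A_o}[\Pi(\rho_{A_i}\otimes\sigma_{B_i})]$; you state it once and read off both inclusions as an equivalence, whereas the paper uses the same identity implicitly but phrases each inclusion as a separate (contrapositive) argument. Your remark about spanning by density operators is harmless but unnecessary, since \eqref{eq:AtoB non-signalling} is already stated only for states $\rho_{B_i}$.
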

This result shows that the maximal set of free superchannels, which is motivated by a mathematical convenience, is characterized by the ultimate bipartite correlation respecting the causality in the theory of relativity~\cite{Beckman2001causal} and thus gains a physical characterization. Furthermore, Proposition~\ref{pro:FS = NS} allows us to consider the non-signalling assisted channel coding as a channel transformation under the maximal set of free superchannels, which will be useful for later discussions.

Finally, we show the monotonicity property of a ``smoothed'' version of the max-relative entropy measure under free superchannels, ensuring it to be a valid resource monotone. Moreover, the monotonicity holds for general resource theories, in which our theory is included as a special case. 
Define the smooth max-relative entropy measure with respect to an arbitrary set of free channels $\mfF'$ as 
$\mfD_{\max,\mfF'}^\epsilon(\mN):=\min_{\|\mN'-\mN\|_\diamond\leq \epsilon} \mfD_{\max,\mfF'}(\mN')$ where $\mfD_{\max,\mfF'}(\mN')$ is the max-relative entropy measure with respect to $\mfF'$ defined analogously to \eqref{eq:Dmax def}, and $\|\cdot\|_\diamond$ is the diamond norm.
Then, we have the following general monotonicity property of the smoothed measure. (Proof in Appendix~\ref{sec:app converse alternative}.)
\begin{lem}\label{lem:monotonicity smoothed Dmax}
 Let $\mfF'$ be an arbitrary set of free channels, $\mO_{\mfF'}$ be the corresponding set of resource non-generating free superchannels. Then, for any channel $\mN$ and free superchannel $\Theta\in\mO_{\mfF'}$, it holds that $\mfD_{\max,\mfF'}^\epsilon(\mN)\geq\mfD_{\max,\mfF'}^\epsilon(\Theta[\mN])$ for $\epsilon\geq 0$.  
\end{lem}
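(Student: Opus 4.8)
The plan is to decouple the smoothing from the superchannel action: first establish monotonicity of the \emph{unsmoothed} measure $\mfD_{\max,\mfF'}$ under any $\Theta\in\mO_{\mfF'}$, and then lift this to the smoothed quantity by exploiting contractivity of superchannels under the diamond norm. These two ingredients combine cleanly because the optimal ``smoothing'' channel for $\mN$ can be pushed through $\Theta$ to produce an admissible smoothing channel for $\Theta[\mN]$.

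For the unsmoothed step, let $s=\mfD_{\max,\mfF'}(\mN)$, so that by definition there exists $\mL\in\mfF'$ with $\mN\leq 2^s\mL$, i.e. $2^s\mL-\mN$ is completely positive. The key observation is that a superchannel, via its quantum-comb realization $\Theta[\mN]=\mE_{\mathrm{post}}\circ(\mN\otimes\id)\circ\mE_{\mathrm{pre}}$~\cite{Chiribella2008architecture,Chiribella_2008transforming}, extends to a \emph{linear} map on the whole space of linear maps that sends completely positive maps to completely positive maps, since pre-composition and post-composition with channels and tensoring with $\id$ all preserve complete positivity. Applying $\Theta$ to the completely positive map $2^s\mL-\mN$ and using linearity shows that $2^s\Theta[\mL]-\Theta[\mN]$ is completely positive, i.e. $\Theta[\mN]\leq 2^s\Theta[\mL]$. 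Because $\Theta\in\mO_{\mfF'}$ is resource non-generating, $\Theta[\mL]\in\mfF'$, so $s$ is a feasible exponent for $\Theta[\mN]$ and hence $\mfD_{\max,\mfF'}(\Theta[\mN])\leq s=\mfD_{\max,\mfF'}(\mN)$.

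To obtain the smoothed bound, let $\mN^\star$ attain the minimum in $\mfD_{\max,\mfF'}^\epsilon(\mN)$, so that $\|\mN^\star-\mN\|_\diamond\leq\epsilon$ and $\mfD_{\max,\mfF'}^\epsilon(\mN)=\mfD_{\max,\mfF'}(\mN^\star)$. The same comb realization shows that $\Theta$ is a contraction in diamond norm: $\|\Theta[\mN^\star]-\Theta[\mN]\|_\diamond=\|\mE_{\mathrm{post}}\circ((\mN^\star-\mN)\otimes\id)\circ\mE_{\mathrm{pre}}\|_\diamond\leq\|\mN^\star-\mN\|_\diamond\leq\epsilon$, using submultiplicativity of the diamond norm under composition, its stability under tensoring with the identity, and that channels have unit diamond norm. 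Therefore $\Theta[\mN^\star]$ is an admissible point in the minimization defining $\mfD_{\max,\mfF'}^\epsilon(\Theta[\mN])$, and combining with the unsmoothed monotonicity gives $\mfD_{\max,\mfF'}^\epsilon(\Theta[\mN])\leq\mfD_{\max,\mfF'}(\Theta[\mN^\star])\leq\mfD_{\max,\mfF'}(\mN^\star)=\mfD_{\max,\mfF'}^\epsilon(\mN)$.

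The main obstacle I anticipate is making the unsmoothed step rigorous: the measure $\mfD_{\max,\mfF'}$ is defined through an inequality between maps that are completely positive but not trace preserving, whereas $\Theta$ is a priori only specified on channels. The crux is therefore to justify the linear, completely-positive-order-preserving extension of $\Theta$ to the cone of completely positive maps, which is exactly what the comb realization supplies; once this structural input is in hand, both the order argument and the diamond-norm contraction reduce to elementary manipulations.
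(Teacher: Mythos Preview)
Your proof is correct and follows essentially the same route as the paper: establish monotonicity of the unsmoothed $\mfD_{\max,\mfF'}$ under free superchannels, prove contractivity of the diamond norm under superchannels via the comb realization, and then push the optimal smoothing channel through $\Theta$. The only cosmetic differences are that you spell out the unsmoothed monotonicity explicitly (the paper cites it as a generic robustness property) and you invoke submultiplicativity of the diamond norm whereas the paper argues contractivity directly from the trace-norm contractivity on an optimizing state.
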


In the next section, we see the relation between the resource measure developed here and operational advantage in state discrimination tasks, which is naturally connected to classical communication tasks we will discuss later. 

%%%%%%%%%%%%%%%%%%%%%%%%%%%%%%%%%%%%%%%%%%%%%%%%%%%%
\textit{\textbf{Operational characterization of channel resources.}}
---
Characterizing an operational advantage enabled by the given resource is a central question in quantum information theory, and recent works have clarified that discrimination tasks are effective platforms to investigate for this purpose. 
More specifically, let $p_{\rm succ}(\mA,\Lambda,\{M_i\}):=\sum_i p_i\Tr[\Lambda(\sigma_i)M_i]$ be the average success probability for discriminating the given state ensemble $\mA=\{p_i,\sigma_i\}$ with the action of channel $\Lambda$. Then, the following relation has been shown to hold for general resource theories.

\begin{lem}[\cite{Takagi2019general}] \label{lem:operational characterization}
 For any convex and closed set of free channels $\mfF'$, it holds that, for any channel $\mN\in\mT(A,B)$,
 \bann
  \max_{\mA,\{M_i\}}\frac{p_{\rm succ}(\mA, \id_E\otimes\mN,\{M_i\})}{\max_{\Xi\in\mfF'}p_{\rm succ}(\mA,\id_E\otimes\Xi,\{M_i\})} = 1+R_{\mfF'}(\mN)
 \eann
 where $\mA$ is the state ensemble defined on the system $EA$ with $E$ being some quantum system, and $R_{\mfF'}(\mN)$ is the generalized robustness defined with respect to the set of free channels $\mfF'$. 
\end{lem}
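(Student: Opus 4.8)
The plan is to prove the identity by establishing the two inequalities separately, working throughout with Choi operators. Writing $J_\Lambda$ for the Choi operator of a channel $\Lambda$, the key reformulation is that for every fixed ensemble $\mA$ and POVM $\{M_i\}$ the quantity $p_{\rm succ}(\mA,\id_E\otimes\Lambda,\{M_i\})$ is a \emph{linear} functional of $\Lambda$; contracting the reference system $E$ shows that it equals $\Tr[\Omega\,J_\Lambda]$ for some positive operator $\Omega\ge 0$ on $AB$ (a single single-use ``tester''), and conversely every $\Omega$ with $0\le\Omega\le\rho_A^{T}\otimes I_B$ for some state $\rho_A$ arises from a genuine ensemble-and-measurement pair. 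This turns the left-hand side into an optimization over such positive operators $\Omega$, which I will match against a conic-programming expression for $R_{\mfF'}(\mN)$.

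For the converse direction I would set $R:=R_{\mfF'}(\mN)$ and take an optimal decomposition $\mN=(1+R)\Xi^\star-R\,\mL$ with $\Xi^\star\in\mfF'$ and $\mL\in\mT$. Tensoring with the identity is linear, so $\id_E\otimes\mN=(1+R)(\id_E\otimes\Xi^\star)-R(\id_E\otimes\mL)$. Since $p_{\rm succ}$ is a sum of nonnegative terms for the channel $\mL$, for every $\mA$ and $\{M_i\}$ one gets $p_{\rm succ}(\mA,\id_E\otimes\mN,\{M_i\})\le(1+R)\,p_{\rm succ}(\mA,\id_E\otimes\Xi^\star,\{M_i\})$. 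As $\Xi^\star\in\mfF'$, the right factor is at most $\max_{\Xi\in\mfF'}p_{\rm succ}(\mA,\id_E\otimes\Xi,\{M_i\})$, and dividing gives ratio $\le 1+R$ for every choice, hence for the maximum. Note this step uses only $\Xi^\star\in\mfF'$ applied as $\id_E\otimes\Xi^\star$, not any stability of $\mfF'$ under tensoring, so the untensored robustness is exactly the right quantity.

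For achievability I would write $1+R$ as the conic program $\min\{\,\Tr[X]/d_A : X\ge J_\mN,\ X\in\mathrm{cone}(\{J_\Xi:\Xi\in\mfF'\})\,\}$, using that any feasible $X$ has $\Tr_B X$ proportional to $I_A$ with constant exactly $1+R$. Dualizing, with a multiplier $W\ge 0$ for $X\ge J_\mN$ and the dual-cone condition for the membership constraint, yields $1+R=\max\{\,\Tr[W\,J_\mN]:W\ge 0,\ \Tr[W\,J_\Xi]\le 1\ \forall\Xi\in\mfF'\,\}$. Let $W^\star$ be dual-optimal. I then realize a rescaled copy $\Omega=c\,W^\star$ as an actual tester: taking $E\cong A$, a maximally entangled reference input (so channel evaluation returns $J_\Lambda$ up to a constant), and a measurement built from $c\,W^\star$ with $c>0$ small enough that $c\,W^\star$ is a valid POVM element, one obtains $p_{\rm succ}(\mA,\id_E\otimes\Lambda,\{M_i\})\propto\Tr[W^\star J_\Lambda]$ for all channels $\Lambda$ with a single fixed ensemble and measurement. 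Evaluating the numerator at $\Lambda=\mN$ and each denominator term at $\Xi\in\mfF'$, the constant cancels and the ratio equals $\Tr[W^\star J_\mN]/\max_{\Xi\in\mfF'}\Tr[W^\star J_\Xi]\ge(1+R)/1=1+R$.

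The easy direction is essentially immediate from positivity, so the work concentrates in the achievability step, in two places. First, strong duality must hold so the dual optimum genuinely equals $1+R$ rather than only lower-bounding it; here I would invoke a Slater point, guaranteed by the existence of a full-rank Choi operator among the free channels, which is precisely the condition flagged in the footnote to the robustness definition. Second, I must verify the \emph{realizability} correspondence --- that every feasible dual witness $W^\star\ge 0$, after rescaling, is induced by a legitimate ensemble-and-measurement pair, with the reference system and the normalization constant tracked so that the \emph{same} fixed measurement simultaneously computes the numerator and every term of the denominator. Once these two points are settled, combining the two inequalities yields the claimed equality.
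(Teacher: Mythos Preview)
The paper does not prove this lemma; it is quoted from \cite{Takagi2019general}. What the paper does contain, inside the proof of Theorem~\ref{thm:discrimination robustness constant} (Appendix~\ref{sec:app discrimination proof}), is exactly your converse argument: from an optimal decomposition $\mN=(1+R)\Xi^\star-R\mL$ and nonnegativity of the $\mL$ term one obtains the bound \eqref{eq:general upper bound proof}, stated there for arbitrary $\mfF'$. So on the $\leq$ side your sketch coincides with the paper.

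For the $\geq$ direction the paper only treats the special case $\mfF'=\mfF$ of constant channels, and does so with a construction different from yours: a $d_A^2$-element ensemble of Pauli-rotated maximally entangled states together with the matching Pauli-rotated POVM built from the dual optimizer $Y$ of \eqref{eq:robustness dual equality}, rather than a single Choi input paired with a binary POVM containing a rescaled witness. The paper's more elaborate ensemble buys the extra feature needed for Theorem~\ref{thm:discrimination robustness constant}---that the denominator may be optimized over measurements \emph{independently} and still collapses to $p_{\rm guess}$---whereas your simpler realization is adequate for Lemma~\ref{lem:operational characterization} precisely because the same $\{M_i\}$ appears in numerator and denominator, so the scaling constant cancels. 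Your overall plan (decomposition for $\leq$; conic dual plus Choi-state tester for $\geq$) is essentially the original argument from the cited reference and is correct once the two points you flag---a Slater interior point for strong duality and the realizability of the rescaled witness as a POVM element---are checked; both are routine.
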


Lemma \ref{lem:operational characterization} was shown aiming to characterize the operational advantage enabled by the given resourceful channel with respect to free resources in terms of the performance of state discrimination tasks at a high level of generality. However, to obtain such a general result, the same measurement strategy $\{M_i\}$ needs to be employed on comparing the resourceful case and resourceless case, which makes the task a little artificial.
Interestingly, the simple structure of our theory allows for a more natural and convenient form in the left-hand side of the above result --- in particular, the optimization over measurements can be taken separately in the numerator and denominator, and the denominator can be reduced to the form relevant to our communication setting. 
Let $p_{\rm succ}(\mA,\mN):=\max_{\{M_i\}}p_{\rm succ}(\mA,\mN,\{M_i\})$ be the optimal success probability, and $p_{\rm guess}(\mA):=\max_i p_i$ be the success probability for the best random guess.
Then, we obtain the following. (Proof in Appendix~\ref{sec:app discrimination proof}.)
\begin{thm} \label{thm:discrimination robustness constant}
For any channel $\mN\in\mT(A,B)$, we get
\ba
 \max_{\mA\in \mathfrak{E}}\frac{p_{\rm succ}(\mA,\id_E\otimes\mN)}{p_{\rm guess}(\mA)} = 1+R(\mN)
 \label{eq:discrimination}
\ea
 where
$
\mathfrak{E}:=\lset\{p_i,\sigma_i^{EA}\}\sbar\Tr_A[\sigma_i^{EA}]=\Tr_A[\sigma_j^{EA}],\ \forall i,j\rset.
$ 
In particular, considering the case when the external system $E$ is one-dimensional, we have that for any ensemble $\mA$ supported on system $A$,
\ba
 \frac{p_{\rm succ}(\mA,\mN)}{p_{\rm guess}(\mA)} \leq 1+R(\mN).
 \label{eq:discrimination2}
\ea
\end{thm}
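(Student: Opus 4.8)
The plan is to prove the equality in \eqref{eq:discrimination} by establishing the two inequalities separately, and to obtain \eqref{eq:discrimination2} as the special case of the upper bound in which $E$ is one-dimensional (so that every ensemble supported on $A$ lies in $\mathfrak{E}$ automatically, the reduced states on $E$ being trivially equal). The upper bound follows quickly from Lemma~\ref{lem:operational characterization}, while the matching lower bound requires constructing an explicit optimal ensemble that already lies in the restricted class $\mathfrak{E}$.

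For the upper bound, fix $\mA=\{p_i,\sigma_i^{EA}\}\in\mathfrak{E}$ with common reduced state $\omega:=\Tr_A[\sigma_i^{EA}]$ on $E$, and let $\{M_i\}$ be any POVM on $EB$. The key observation is that a constant channel $\Xi\in\mfF$ with output $\tau$ maps every $\sigma_i^{EA}$ to the same operator $\omega\otimes\tau$, so that $\tilde M_i:=\Tr_E[(\omega\otimes I_B)M_i]$ defines a valid POVM on $B$ (indeed $\sum_i\tilde M_i=\Tr_E[\omega\otimes I_B]=I_B$) and gives $\max_{\Xi\in\mfF}p_{\rm succ}(\mA,\id_E\otimes\Xi,\{M_i\})=\lambda_{\max}\!\big(\sum_i p_i\tilde M_i\big)\le p_{\rm guess}(\mA)$, where the inequality uses $\sum_i p_i\tilde M_i\preceq (\max_i p_i)\sum_i\tilde M_i=p_{\rm guess}(\mA)\,I_B$. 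Now choose $\{M_i\}$ optimal for the numerator $p_{\rm succ}(\mA,\id_E\otimes\mN)$ (the induced denominator is then strictly positive, since otherwise the numerator would vanish, contradicting $p_{\rm succ}(\mA,\id_E\otimes\mN)\ge p_{\rm guess}(\mA)>0$). Since the denominator in Lemma~\ref{lem:operational characterization} is bounded above by $p_{\rm guess}(\mA)$, the ratio appearing there is at least $p_{\rm succ}(\mA,\id_E\otimes\mN)/p_{\rm guess}(\mA)$, whence the latter is $\le 1+R(\mN)$. Maximizing over $\mA\in\mathfrak{E}$ proves ``$\le$'' in \eqref{eq:discrimination}, and specializing to one-dimensional $E$ yields \eqref{eq:discrimination2}.

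For the lower bound I would work directly with the semidefinite-programming form of the robustness. Writing $J_{\mN}$ for the Choi operator of $\mN$ and recalling that the constant channels are exactly those with Choi operator $I_A\otimes\tau$ for a state $\tau$, \eqref{eq:robustness def} becomes $1+R(\mN)=\min\{\,s : J_{\mN}\preceq s\,I_A\otimes\tau,\ \tau\ge0,\ \Tr\tau=1\,\}$, whose Lagrange dual is $1+R(\mN)=\max\{\,\Tr[W J_{\mN}] : W\ge0,\ \Tr_A W\preceq I_B\,\}$ (strong duality holding by Slater's condition). From an optimal witness $W^\star$ I would build a discrimination instance on a reference $E\cong A$ whose ensemble members all share the maximally mixed marginal $\omega=I_E/d_A$, so that membership in $\mathfrak{E}$ is guaranteed by construction, reading the probabilities $p_i$ and the signal states $\sigma_i^{EA}$ off the spectral or block data of $W^\star$ (via the channel--state duality between operators on $A$ and on $E\cong A$) and taking the measurement from $W^\star$ itself. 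Complementary slackness between $W^\star$ and the primal optimizer $\tau^\star$, together with the product form $I_A\otimes\tau$ of free Choi operators, should then force $p_{\rm succ}(\mA,\id_E\otimes\mN)=(1+R(\mN))\,p_{\rm guess}(\mA)$, establishing ``$\ge$''.

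The main obstacle is precisely this construction: the bound must be saturated \emph{within} $\mathfrak{E}$, and one cannot simply symmetrize a generic optimizer of Lemma~\ref{lem:operational characterization} to have equal $E$-marginals, since twirling the reference system to equalize its marginal also destroys exactly the reference correlations that generate the advantage. The ensemble therefore has to be assembled from the SDP optimizers so that the equal-marginal constraint and the saturation $p_{\rm succ}=(1+R)\,p_{\rm guess}$ hold simultaneously; checking that the guessing probability $p_{\rm guess}(\mA)$ emerging from the construction matches the normalization dictated by $\Tr_A W^\star\preceq I_B$ is the delicate point.
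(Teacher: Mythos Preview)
Your upper bound is correct and essentially coincides with the paper's: both reduce the denominator $\max_{\Xi\in\mfF}p_{\rm succ}(\mA,\id_E\otimes\Xi)$ to $p_{\rm guess}(\mA)$ using that constant channels send every $\sigma_i^{EA}$ to the common state $\omega\otimes\tau$ when $\mA\in\mathfrak{E}$, and then invoke the robustness decomposition (which is exactly the $\leq$ direction underlying Lemma~\ref{lem:operational characterization}).

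The gap is in the lower bound. You correctly identify that the construction must take place inside $\mathfrak{E}$ and that the dual optimizer $Y$ (your $W^\star$) with $\Tr_A Y\preceq I_B$ is the right raw material, but ``reading the probabilities and signal states off the spectral or block data of $W^\star$'' together with complementary slackness is not a proof: nothing in that outline forces the resulting ensemble to have a common $E$-marginal, and you yourself flag matching $p_{\rm guess}(\mA)$ to the normalization of $W^\star$ as unresolved. The paper closes this gap by an explicit Pauli-twirl construction that sidesteps complementary slackness entirely. Take $E\cong A$, the uniform ensemble $p_j=1/d_A^2$, states $\sigma_j^{EA}=(P_j\otimes\id_A)\Phi^{EA}(P_j\otimes\id_A)$ for $j=0,\dots,d_A^2-1$, and POVM $M_j=\tfrac{1}{d_A}(P_j\otimes\id_B)Y_{EB}(P_j\otimes\id_B)$, where $Y$ is a dual optimizer normalized to satisfy $\Tr_E Y_{EB}=I_B$ (this can always be arranged, cf.\ \eqref{eq:robustness dual equality}). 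The Pauli group does three things at once: (i) each $\sigma_j^{EA}$ has marginal $\Tr_A\sigma_j^{EA}=I_E/d_A$, so $\mA\in\mathfrak{E}$ automatically; (ii) the twirl identity $\sum_j P_j X P_j=d_A\Tr[X]\,I$ together with $\Tr_E Y=I_B$ makes $\{M_j\}$ a genuine POVM; and (iii) the Paulis cancel in $\Tr[(\mP_j\otimes\mN)(\Phi)\,(\mP_j\otimes\id)(Y)]$, giving $p_{\rm succ}(\mA,\id_E\otimes\mN,\{M_j\})=\Tr[J_\mN Y]/d_A^2=(1+R(\mN))/d_A^2$, which equals $(1+R(\mN))\,p_{\rm guess}(\mA)$ on the nose. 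No appeal to primal--dual slackness is needed; the equality is obtained by direct computation once the right group-covariant ensemble is chosen.
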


Now, we are in a position to consider communication tasks, for which we employ the relation between resource measure and state discrimination. 

%%%%%%%%%%%%%%%%%%%%%%%%%%%%%%%%%%%%%%%%%%%%%%%%%%%%
\textit{\textbf{Converse bound for the assisted capacity.}}
---
Let $M$ be the number of classical messages Alice tries to send to Bob. 
Also, let $A_i$ and $B_o$ be the $m$-dimensional message space and $\Theta\in\mS(\{A_o,B_i\},\{A_i,B_o\})$ be a superchannel that is constructed by a non-signalling bipartite channel as in Fig.~\ref{fig:NS_assisted}, i.e. $\Theta\in\mO_{\rm NS}$. (We also call this \textit{non-signalling assisted code} in the context of message transmission.)
For a given superchannel, we write the average error probability of decoding as $\varepsilon[\Theta,\mN]:=1-\frac{1}{M}\sum_{m=0}^{M-1}\bra{m}\Theta[\mN](\dm{m})\ket{m}$.
Then, the non-signalling assisted one-shot classical capacity with error $\epsilon$ is defined as~\cite{Wang2018converse} 
\ba
 C_{{\rm NS},(1)}^{\epsilon}(\mN):=\sup_\Theta\lset \log M\sbar \varepsilon[\Theta,\mN]\leq \epsilon\rset.
\ea

Using our framework and above result, we can concisely show a converse bound for the one-shot non-signalling assisted classical capacity. 

\begin{thm} \label{thm:one-shot bound}
For $\delta\geq 0$ and $0\leq\epsilon< 1-\delta/2$, it holds that 
\ba
 C_{{\rm NS},(1)}^{\epsilon}(\mN)\leq \mfD_{\max}^\delta(\mN)+\log\left(\frac{1}{1-\epsilon-\delta/2}\right)
\ea
\end{thm}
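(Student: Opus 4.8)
The plan is to bound the message-transmission error from below by relating it to a state-discrimination quantity, and then invoke Theorem~\ref{thm:discrimination robustness constant} together with the smoothing/monotonicity machinery of Lemma~\ref{lem:monotonicity smoothed Dmax}. Concretely, suppose $\Theta\in\mO_{\rm NS}$ achieves $\log M$ with error $\varepsilon[\Theta,\mN]\leq\epsilon$. The decoding success probability is $\frac{1}{M}\sum_m\bra{m}\Theta[\mN](\dm{m})\ket{m}\geq 1-\epsilon$. I would like to read the left-hand side as a success probability $p_{\rm succ}$ for discriminating the uniform ensemble $\mA=\{1/M,\dm{m}\}_{m}$ under the \emph{output channel} $\Theta[\mN]$, with the measurement being the computational-basis projectors $\{\dm{m}\}$. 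Since this ensemble is supported on a single system with $p_{\rm guess}(\mA)=1/M$, the one-sided bound \eqref{eq:discrimination2} applied to $\Theta[\mN]$ gives $M\cdot p_{\rm succ}(\mA,\Theta[\mN])\leq 1+R(\Theta[\mN])$, hence $M(1-\epsilon)\leq 1+R(\Theta[\mN])=2^{\mfD_{\max}(\Theta[\mN])}$.

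\textbf{The main step} is then to control $\mfD_{\max}(\Theta[\mN])$ by $\mfD_{\max}^\delta(\mN)$. Here is where I would use that $\Theta\in\mO_{\rm NS}=\mO_\mfF$ (Proposition~\ref{pro:FS = NS}) is a resource non-generating free superchannel, so monotonicity of $\mfD_{\max}$ under free superchannels applies. The smoothing parameter $\delta$ enters as follows: pick $\mN'$ with $\|\mN'-\mN\|_\diamond\leq\delta$ attaining $\mfD_{\max}(\mN')=\mfD_{\max}^\delta(\mN)$. I would replace $\mN$ by $\mN'$ inside the code, estimate how much the success probability degrades under this perturbation (the diamond-norm distance bounds the change in any discrimination probability by $\delta/2$), obtaining a bound of the form $M(1-\epsilon-\delta/2)\leq 2^{\mfD_{\max}(\Theta[\mN'])}$, and then apply monotonicity $\mfD_{\max}(\Theta[\mN'])\leq\mfD_{\max}(\mN')=\mfD_{\max}^\delta(\mN)$ since $\Theta$ is free. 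Rearranging and taking logarithms yields
\ba
\log M\leq \mfD_{\max}^\delta(\mN)+\log\!\left(\frac{1}{1-\epsilon-\delta/2}\right),\nonumber
\ea
and taking the supremum over valid codes $\Theta$ gives the claimed converse.

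\textbf{The hard part} I anticipate is the perturbation bookkeeping: making precise that swapping $\mN$ for the smoothed $\mN'$ costs at most $\delta/2$ in success probability, uniformly over the superchannel $\Theta$. The clean way is to observe that $\Theta[\cdot]$ is a linear, diamond-norm-contractive map on channels (composition with a fixed comb cannot increase the diamond distance), so $\|\Theta[\mN]-\Theta[\mN']\|_\diamond\leq\|\mN-\mN'\|_\diamond\leq\delta$, and then use that any fixed measurement strategy on a pair of channels differing by $\delta$ in diamond norm changes the average success probability by at most $\delta/2$. This justifies the replacement $1-\epsilon\to 1-\epsilon-\delta/2$ while allowing me to apply monotonicity directly to $\mN'$ rather than $\mN$. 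The constraint $\epsilon<1-\delta/2$ is exactly what keeps the logarithm argument positive, so the statement's hypotheses line up with this estimate. The remaining steps are the routine algebraic rearrangement and the observation that $\mfD_{\max}=\log(1+R)$ converts the robustness bound into the max-relative-entropy form.
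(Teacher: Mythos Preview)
Your proposal is correct and follows essentially the same approach as the paper: identify the decoding success with $p_{\rm succ}$ for the uniform ensemble, invoke \eqref{eq:discrimination2}, use Proposition~\ref{pro:FS = NS} to treat $\Theta$ as a free superchannel, and combine a $\delta/2$ perturbation estimate with monotonicity. The only cosmetic difference is where the smoothing is inserted: the paper picks $\mL$ in the $\delta$-ball around $\Theta[\mN]$ achieving $\mfD_{\max}^\delta(\Theta[\mN])$ and then applies the \emph{smoothed} monotonicity (Lemma~\ref{lem:monotonicity smoothed Dmax}), whereas you pick $\mN'$ in the $\delta$-ball around $\mN$, push it through $\Theta$ via diamond-norm contractivity, and apply the \emph{unsmoothed} monotonicity --- this is exactly the content of the proof of Lemma~\ref{lem:monotonicity smoothed Dmax} inlined, so the two arguments coincide.
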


\begin{proof}

Let $\mA=\{1/M,\dm{m}\}_{m=0}^{M-1}$ and $\Theta$ be a non-signalling assisted code achieving $\varepsilon[\Theta,\mN]= 1-p_{\rm succ}(\mA,\Theta[\mN])\leq \epsilon$. 
Using Proposition \ref{pro:FS = NS}, we have that $\Theta\in\mO_\mfF$. 
Now, take the channel $\mL$ with $\|\mL-\Theta[\mN]\|_\diamond\leq \delta$ that satisfies $\mfD_{\max}(\mL)=\mfD_{\max}^\delta(\Theta[\mN])$. Using \eqref{eq:discrimination2} in Theorem \ref{thm:discrimination robustness constant} and $p_{\rm guess}(\mA)=1/M$, we get 
\ba
p_{\rm succ}(\mA,\mL)\leq \frac{1+R(\mL)}{M}=2^{\mfD_{\max}^\delta(\Theta[\mN])}/M
\label{eq:error succ uniform classical alt}
\ea
where we used $\log(1+R(\mL))=\mfD_{\max}(\mL)=\mfD_{\max}^\delta(\Theta[\mN])$.
Then, we use the following simple lemma, which we show in Appendix~\ref{sec:guess diamond}.

\begin{lem} \label{lem:guess diamond}
For any state ensemble $\mA$ and channels $\mL,\mM\in \mT(A,B)$, we have
$\left|p_{\rm succ}(\mA,\mM)-p_{\rm succ}(\mA,\mL)\right| \leq \frac{1}{2}\|\mM-\mL\|_\diamond$.
\end{lem}

Applying  Lemma~\ref{lem:guess diamond} and the monotonicity of the smooth max-relative entropy measure (Lemma~\ref{lem:monotonicity smoothed Dmax}) to \eqref{eq:error succ uniform classical alt}, we reach $p_{\rm succ}(\mA,\Theta[\mN])-\delta/2\leq 2^{\mfD_{\max}^\delta(\mN)}/M.$
The proof is completed by taking the logarithm on both sides and using $p_{\rm succ}(\mA,\Theta[\mN])=1-\varepsilon[\Theta,\mN]\geq 1-\epsilon$.

\end{proof}

This result establishes a fundamental connection between the resourcefulness of the channel quantified in the resource-theoretic framework and its operational capability as a communication channel.

We remark that related bounds have been presented in Refs.~\cite{Datta2013oneshot,Matthews2014finite}.
An advantage of our result over their bounds is that, besides the simplicity of its proof, it naturally leads to the strong converse property as we shall see below.  
Note also that by combining the relations presented in Refs.~\cite{Matthews2014finite,Wang2019converse,Wang2019asymmetric} one can alternatively reach the same one-shot bound in Theorem~\ref{thm:one-shot bound}~\cite{Fang2019comment}.

%%%%%%%%%%%%%%%%%%%%%%%%%%%%%%%%%%%%%%%%%%%

\textit{\textbf{Strong converse property for the assisted capacity.}}
---
Next, we discuss the advantage of Theorem \ref{thm:one-shot bound} in the asymptotic setting. For this aim, we take into account the situation where multiple copies of the channel are in use. Consider a sequence of the message size $M^{(n)}$ and non-signalling assisted codes $\Theta^{(n)}$. 
Then, we can define the non-signalling assisted classical capacity as the maximum rate of the message transmission with vanishing error in the asymptotic limit:~\footnote{Note that when it comes to the asymptotic setting, non-signalling assisted quantum capacity equals to a half of the non-signalling assisted classical capacity due to the interconvertibility between the two communication settings via quantum teleportation and superdense coding protocols.}
\bal
 C_{\rm NS}(\mN):=\sup_{\{\Theta^{(n)}\}}\lset \underline{\lim} \frac{\log M^{(n)}}{n}\sbar \lim_{n\to\infty} \varepsilon[\Theta^{(n)},\mN^{\otimes n}]=0\rset.
 \label{eq:NS assisted capacity def}
\eal
We also introduce the non-signalling assisted strong converse capacity $C_{\rm NS}^\dagger(\mN)$ by replacing $\lim_{n\to\infty} \varepsilon[\Theta^{(n)},\mN^{\otimes n}]=0$ in \eqref{eq:NS assisted capacity def} with $\lim_{n\to\infty} \varepsilon[\Theta^{(n)},\mN^{\otimes n}]<1$.
By definition, it holds that $C_{\rm NS}(\mN)\leq C_{\rm NS}^\dagger(\mN)$ for any $\mN$. If $C_{\rm NS}(\mN)= C_{\rm NS}^\dagger(\mN)$ also holds, we say that the strong converse property holds. 

A direct proof of the strong converse property for the entanglement assisted capacity without employing the operational argument via quantum reverse Shannon theorem was reported in Ref.~\cite{Gupta2015strong}. 
There, they first put an upper bound for the decoding success probability in terms of variants of mutual information derived from the $\alpha$-sandwiched R\'enyi entropy~\cite{Muller2013renyi,Wilde2014strong} using the meta-converse bound~\cite{Matthews2014finite}. 
Then, they showed the additivity of the $\alpha$-mutual information for $\alpha\in(1,\infty)$ using the multiplicativity of completely bounded $p$-norms~\cite{Devetak2006multiplicativity}, which allowed them to connect the $\alpha$-mutual information to the usual mutual information, eventually proving the strong converse.   

Here, we see that Theorem~\ref{thm:one-shot bound} naturally allows for this type of direct proof of an even stronger claim, the strong converse property for the non-signalling assisted capacity, without delving into involved steps such as the ones in Ref.~\cite{Gupta2015strong}. 
The main idea is to combine Theorem~\ref{thm:one-shot bound} with the following asymptotic equipartition property~\cite{Fang2018maxinfov2},
\ba
\lim_{\delta\to 0}\lim_{n\to\infty}\frac{1}{n}\mfD_{\max}^\delta(\mN^{\otimes n})=I(\mN)
\label{eq:max info AEP}
\ea
where $I(\mN):= \max_{\ket{\psi}}I(\rho_{AB})$ and $\rho_{AB}:=\id\otimes\mN(\dm{\psi})$ is the channel mutual information.

\begin{cor} \label{pro:strong converse}
For any channel $\mN$, the strong converse property holds for the non-signalling assisted capacity, i.e. $C_{\rm NS}(\mN)=C_{\rm NS}^\dagger(\mN)$.
\end{cor}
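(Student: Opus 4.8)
The plan is to derive the strong converse property as an essentially immediate consequence of the one-shot converse bound in Theorem~\ref{thm:one-shot bound} together with the asymptotic equipartition property \eqref{eq:max info AEP}. Since $C_{\rm NS}(\mN)\leq C_{\rm NS}^\dagger(\mN)$ holds by definition, the only thing requiring proof is the reverse inequality $C_{\rm NS}^\dagger(\mN)\leq C_{\rm NS}(\mN)$. Because the achievability $C_{\rm NS}(\mN)\geq I(\mN)$ is known, it suffices to establish the single bound $C_{\rm NS}^\dagger(\mN)\leq I(\mN)$; this sandwiches both capacities at $I(\mN)$ and forces equality.

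First I would apply Theorem~\ref{thm:one-shot bound} to the $n$-fold channel $\mN^{\otimes n}$, which gives, for any $\delta\geq 0$ and $0\leq\epsilon<1-\delta/2$,
\ba
 C_{{\rm NS},(1)}^{\epsilon}(\mN^{\otimes n})\leq \mfD_{\max}^\delta(\mN^{\otimes n})+\log\left(\frac{1}{1-\epsilon-\delta/2}\right).
 \label{eq:nfold converse}
\ea
Next I would translate this into a statement about rates. Suppose $R<C_{\rm NS}^\dagger(\mN)$; then by definition of the strong converse capacity there is a sequence of codes $\Theta^{(n)}$ with $\log M^{(n)}/n\to R'$ for some $R'>R$ and $\limsup_n \varepsilon[\Theta^{(n)},\mN^{\otimes n}]<1$. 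Thus for large $n$ the error is bounded away from one by some fixed $\epsilon_0<1$, so $\log M^{(n)}\leq C_{{\rm NS},(1)}^{\epsilon_0}(\mN^{\otimes n})$. Dividing \eqref{eq:nfold converse} by $n$, the additive $\log(1/(1-\epsilon_0-\delta/2))$ term vanishes in the limit, leaving
\ba
 R'=\lim_{n\to\infty}\frac{\log M^{(n)}}{n}\leq \lim_{n\to\infty}\frac{1}{n}\mfD_{\max}^\delta(\mN^{\otimes n}).
 \label{eq:rate bound}
\ea
Finally I would send $\delta\to 0$ and invoke the asymptotic equipartition property \eqref{eq:max info AEP}, which collapses the right-hand side of \eqref{eq:rate bound} to $I(\mN)$, yielding $R'\leq I(\mN)$ and hence $C_{\rm NS}^\dagger(\mN)\leq I(\mN)$. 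Combined with the achievability bound, this gives $C_{\rm NS}(\mN)=C_{\rm NS}^\dagger(\mN)=I(\mN)$.

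The structure of the argument is genuinely straightforward once Theorem~\ref{thm:one-shot bound} and \eqref{eq:max info AEP} are in hand, and that is precisely the point the authors are making about the economy of the resource-theoretic approach. The one subtlety I would be careful about is the order of limits: the bound \eqref{eq:nfold converse} holds for each fixed $\delta$, so I must take $n\to\infty$ first (eliminating the error-dependent additive constant and producing the regularized quantity $\lim_n \tfrac{1}{n}\mfD_{\max}^\delta(\mN^{\otimes n})$) and only afterwards take $\delta\to 0$, matching exactly the nested-limit form in which \eqref{eq:max info AEP} is stated. The main obstacle, such as it is, lies entirely in \eqref{eq:max info AEP} itself, whose proof is imported from Ref.~\cite{Fang2018maxinfov2}; the remaining work here is bookkeeping with the rate definition and the constraint $\epsilon<1-\delta/2$, which is satisfied for all small $\delta$ since $\epsilon_0<1$.
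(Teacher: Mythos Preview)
Your proposal is correct and follows essentially the same route as the paper: apply Theorem~\ref{thm:one-shot bound} to $\mN^{\otimes n}$, divide by $n$, take $n\to\infty$ and then $\delta\to 0$ using the AEP \eqref{eq:max info AEP}, and close the sandwich with the achievability bound. The paper phrases achievability as $C_{\rm NS}(\mN)\geq C_{\rm EA}(\mN)=I(\mN)$, which is exactly your $C_{\rm NS}(\mN)\geq I(\mN)$.
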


\begin{proof}
Theorem~\ref{thm:one-shot bound} implies that for any $n$ and $0\leq\epsilon<1-\delta/2$,
\bann
 \frac{1}{n}C_{{\rm NS},(1)}^{\epsilon}(\mN^{\otimes n})\leq \frac{1}{n}\mfD_{\max}^\delta(\mN^{\otimes n})+\frac{1}{n}\log\left(\frac{1}{1-\epsilon-\delta/2}\right). 
\eann
Taking $\lim_{\delta\to 0}\lim_{n\to\infty}$ in both sides and using \eqref{eq:max info AEP}, we obtain $\lim_{n\to \infty}\frac{1}{n}C_{{\rm NS},(1)}^{\epsilon}(\mN^{\otimes n})\leq I(\mN)=C_{\rm EA}(\mN)$ for any $0\leq \epsilon <1$ where $C_{\rm EA}$ is the entanglement assisted classical capacity~\cite{Bennett2002capacity}.
This proves $C_{\rm NS}(\mN)\geq C_{\rm EA}(\mN) \geq C_{\rm NS}^\dagger(\mN)$, showing the strong converse property.
\end{proof}

This result provides a new perspective to the ultimate communication capability with many channel uses in terms of the operational characterization of channel resources.

%%%%%%%%%%%%%%%%%%%%%%%%%%%%%%%%%%%%%%%%%%%%%%%%%%%%%%%
\textit{\textbf{Channel simulation.}}
---
Proposition~\ref{pro:FS = NS} also allows us to identify the non-signalling assisted channel simulation with the resource dilution problem in our resource theory.  
Specifically, let $\id_k$ be the identity channel acting on $k$-dimensional Hilbert space. 
We ask the minimum size of the identity channel needed to realize the desired channel by free superchannels. 
To this end, we define the one-shot dilution cost for given channel $\mN$ and error $\epsilon$ as 
\bann
 C_{c,(1)}^{\epsilon}(\mN) := \min\lset k \sbar \exists \Theta\in\mO_\mfF\ {\rm s.t.}\ \|\Theta[\id_k]-\mN\|_\diamond\leq \epsilon\rset.
\eann

Then, we obtain the following. (Proof in Appendix~\ref{sec:app simulation cost}.)

\begin{thm}\label{thm:dilution}
\ba
 C_{c,(1)}^{\epsilon}(\mN) = \lceil 2^{\frac{1}{2}\mfD_{\max}^\epsilon(\mN)}\rceil.
\ea
\end{thm}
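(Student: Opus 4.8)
The plan is to prove the two matching inequalities separately: the converse bound $C_{c,(1)}^{\epsilon}(\mN)\geq \lceil 2^{\frac12\mfD_{\max}^\epsilon(\mN)}\rceil$ and the achievability bound $C_{c,(1)}^{\epsilon}(\mN)\leq \lceil 2^{\frac12\mfD_{\max}^\epsilon(\mN)}\rceil$. The single quantitative fact that drives both directions is the evaluation of the cost of the noiseless resource, so I would first establish $\mfD_{\max}(\id_k)=2\log k$. The Choi operator of $\id_k$ is $J_{\id_k}=k\,\ketbra{\phi^+}{\phi^+}$ with $\ket{\phi^+}$ the maximally entangled state, while a constant channel $\Xi(\cdot)=\sigma$ has Choi operator $\mathbb{I}_{A}\otimes\sigma$. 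Sandwiching the defining inequality $J_{\id_k}\leq 2^s\,\mathbb{I}_{A}\otimes\sigma$ by $\ket{\phi^+}$ and using $\bra{\phi^+}(\mathbb{I}\otimes\sigma)\ket{\phi^+}=1/k$ forces $2^s\geq k^2$ for every $\sigma$, with equality attained at $\sigma=\mathbb{I}/k$; hence $\mfD_{\max}(\id_k)=\log k^2=2\log k$.

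For the converse, suppose $\Theta\in\mO_\mfF$ achieves $\|\Theta[\id_k]-\mN\|_\diamond\leq\epsilon$. Then $\Theta[\id_k]$ lies in the feasible set of the smoothing, so $\mfD_{\max}^\epsilon(\mN)\leq\mfD_{\max}(\Theta[\id_k])$ by definition of $\mfD_{\max}^\epsilon$. Applying Lemma~\ref{lem:monotonicity smoothed Dmax} with $\epsilon=0$, i.e.\ monotonicity of the unsmoothed $\mfD_{\max}$ under free superchannels, gives $\mfD_{\max}(\Theta[\id_k])\leq\mfD_{\max}(\id_k)=2\log k$. Chaining the two inequalities yields $\mfD_{\max}^\epsilon(\mN)\leq 2\log k$, hence $k\geq 2^{\frac12\mfD_{\max}^\epsilon(\mN)}$, and since $k$ is a positive integer this gives $k\geq\lceil 2^{\frac12\mfD_{\max}^\epsilon(\mN)}\rceil$.

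For achievability, let $\mN'$ attain the smoothing, so that $\|\mN'-\mN\|_\diamond\leq\epsilon$ and $\mfD_{\max}(\mN')=\mfD_{\max}^\epsilon(\mN)$. It suffices to \emph{exactly} simulate $\mN'$ from $\id_k$ by a free superchannel, since then the output channel $\Theta[\id_k]=\mN'$ already satisfies $\|\Theta[\id_k]-\mN\|_\diamond\leq\epsilon$. Thus the crux is an exact-conversion lemma: whenever $\mfD_{\max}(\mN')\leq 2\log k=\mfD_{\max}(\id_k)$, there exists $\Theta\in\mO_{\rm NS}=\mO_\mfF$ with $\Theta[\id_k]=\mN'$. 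The choice $k=\lceil 2^{\frac12\mfD_{\max}^\epsilon(\mN)}\rceil$ guarantees $2\log k\geq\mfD_{\max}(\mN')$, so the lemma applies and closes the gap with the converse.

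To build the superchannel I would pass to the Choi/link-product representation: a non-signalling superchannel is a positive operator $Z$ obeying the comb (semicausal) marginal constraints encoded by \eqref{eq:BtoA non-signalling} together with the additional non-signalling constraint \eqref{eq:AtoB non-signalling}, and the simulated channel is recovered by linking $Z$ against $J_{\id_k}$. The hypothesis $\mfD_{\max}(\mN')\leq 2\log k$ unpacks, via $\mfD_{\max}(\mN')=\log(1+R(\mN'))$, to $J_{\mN'}\leq k^2\,\mathbb{I}_{A_i}\otimes\sigma$ for some state $\sigma$, equivalently $k^2\,\mathbb{I}_{A_i}\otimes\sigma-J_{\mN'}\geq 0$; I would use this positive slack and $\sigma$ to write down an explicit feasible $Z$ and verify (i) positivity, (ii) the superchannel normalization, (iii) both non-signalling conditions, and (iv) that linking with $\id_k$ returns $J_{\mN'}$ exactly. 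The dimension bound $k^2\geq 2^{\mfD_{\max}(\mN')}$ enters precisely to secure (i). I expect the simultaneous verification of (iii) and (iv)—showing that the explicit comb respects non-signalling while reproducing $\mN'$ on the nose—to be the main obstacle, as that is where the noiseless $k$-dimensional wire must be shown to carry, through the factor $k^2$, exactly enough correlation to account for the resource content of $\mN'$. Combining the converse and achievability bounds gives the claimed equality.
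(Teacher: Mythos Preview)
Your converse argument is correct and essentially identical to the paper's: the chain $\mfD_{\max}^\epsilon(\mN)\leq\mfD_{\max}(\Theta[\id_k])\leq\mfD_{\max}(\id_k)=2\log k$ is exactly what the paper writes, with your direct evaluation of $\mfD_{\max}(\id_k)$ replacing its appeal to Proposition~\ref{pro:dmax upper bound}. Your reduction of achievability to exact simulation of the smoothed channel $\mN'$ is also the same as the paper's.

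The gap is that you stop short of the construction and flag the non-signalling verification as ``the main obstacle.'' The paper sidesteps this obstacle entirely by working in $\mO_\mfF$ rather than $\mO_{\rm NS}$ (they coincide by Proposition~\ref{pro:FS = NS}). Since $k^2\geq 1+R(\mN')$, the robustness definition furnishes a channel $\mL$ with $\tfrac{\mN'+(k^2-1)\mL}{k^2}\in\mfF$. The paper then defines the ``measure-and-prepare'' superchannel
\[
\Theta[\Lambda]\;=\;\frac{\Tr[J_{\id_k}J_\Lambda]}{k^2}\,\mN'\;+\;\frac{\Tr[(k\mathbb{I}-J_{\id_k})J_\Lambda]}{k^2}\,\mL,
\]
which is a convex combination of $\mN'$ and $\mL$ for every input channel $\Lambda$ (the coefficients are nonnegative and sum to $1$ because $0\leq J_{\id_k}\leq k\mathbb{I}$ and $\Tr J_\Lambda=k$). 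Resource non-generation is then a one-line check: for a constant channel $\Xi(\cdot)=\tau$ one has $\Tr[J_{\id_k}J_\Xi]/k^2=\bra{\Phi_k}(\tfrac{\mathbb{I}}{k}\otimes\tau)\ket{\Phi_k}=1/k^2$, so $\Theta[\Xi]=\tfrac{\mN'+(k^2-1)\mL}{k^2}\in\mfF$ by construction. Finally $\Theta[\id_k]=\mN'$ since $\Tr[J_{\id_k}^2]=k^2$. No Choi/link-product bookkeeping or direct verification of \eqref{eq:AtoB non-signalling}--\eqref{eq:BtoA non-signalling} is needed; the robustness decomposition already encodes exactly the right slack.
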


This result provides the generalized robustness/max-relative entropy measure with another operational meaning. 
As expected, our result coincides with the non-signalling assisted one-shot channel simulation cost obtained by a different approach~\cite{Fang2018maxinfov2}.
Since our method is based on a systematic resource theoretic treatment, it will provide a useful tool with wide applicability. 
% We also remark that because of \eqref{eq:max info AEP}, the asymptotic cost can be characterized by the mutual information of the channel. 
% This, together with the results on the assisted capacity, comes with an important observation that the channel transformation with non-signalling assistance is reversible at the asymptotic limit. 
We also remark that because of \eqref{eq:max info AEP}, the asymptotic cost is characterized by the mutual information of the channel, which makes the channel transformation reversible at the asymptotic limit. 
Our resource-theoretic treatment makes the comparison to other reversible theories clearer --- in our case the mutual information serves as the ``potential'' function that fully characterizes the resource transformability.

%%%%%%%%%%%%%%%%%%%%%%%%%%%%%%%%%%%%%%%%%%%%%%%%%%%
\textit{\textbf{Conclusions.}}
---
We introduced a resource theory of channels relevant to communication scenarios where the set of constant channels serves as the free channels. 
We considered channel transformation under the maximal set of free superchannels and found that such channel transformation coincides with that under non-signalling assistance.
Employing this identification, we applied our formalism to provide a converse bound for the one-shot non-signalling assisted classical capacity, which leads to the strong converse property for the non-signalling assisted capacity, as well as to obtain the one-shot channel simulation cost with non-signalling assistance by considering the resource dilution cost under free superchannels. 
Both of the quantities are characterized by the max-relative entropy measure with respect to our choice of free channels, endowing this measure with clear operational meanings. 

Our results indicate the further potential of resource theoretic framework as effective tools to solve concrete problems. In this respect, an interesting future direction will be to adopt our method to encompass other communication settings such as non-assisted classical/quantum communication and communication with restricted quantum measurements. 

\textit{Note added.}
---
Recently, we became aware that the latest update of Ref.~\cite{Fang2018maxinfov2} has obtained a similar relation to the one in Lemma~\ref{lem:monotonicity smoothed Dmax} for the case of non-signalling superchannels.

\textit{Acknowledgements}
---
We thank Kun Fang and Xin Wang for useful comments on the manuscript. 
R.T. acknowledges the support of NSF, ARO, IARPA, and Takenaka Scholarship Foundation.
M.H. was supported in part by JSPS Grant-in-Aid for Scientific Research (A) No.17H01280 and for Scientific Research (B) No.16KT0017, and Kayamori Foundation of Informational Science Advancement.

\bibliographystyle{apsrmp4-2}
\bibliography{myref}

%%%%%%%%%%%%%%%%%%%%%%%%%%%%%%%%%%%%%%%%%%%%%
%%%%%%%%%%%%%%%%%%%%%%%%%%%%%%%%%%%%%%%%%%%%%
\appendix
\widetext
\section{Properties of $\mfD_{\max}(\mN)$} \label{sec:app property}
First, it is straightforward to see that the max-relative entropy measure, as well as the robustness measure, inherits generic properties such as faithfulness and monotonicity under free operations, which can be proved in the same way as the case for states (see e.g.,~\cite{regula_2018}). 
Here, we elaborate on the other properties of this measure: additivity under tensor product, its tight upper bound, and the relation with max-information~\cite{Fang2018maxinfov2}.  

Recall the definition of the max-relative entropy between two states: 
$D_{\max}(\rho||\sigma):= \min\lset s \sbar \rho\leq 2^s\sigma\rset$. 
We first show the following additivity property.
\begin{pro}\label{pro:additivity}
$\mfD_{\max}(\mN_1\otimes\mN_2)=\mfD_{\max}(\mN_1)+\mfD_{\max}(\mN_2)$ for any channels $\mN_1$ and $\mN_2$.
\end{pro}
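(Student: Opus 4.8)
The plan is to recast the measure as a semidefinite program (SDP) in terms of Choi operators and then prove the two halves of additivity by tensoring optimal primal and dual solutions. Writing $J_\mN:=(\id_A\otimes\mN)(\ketbra{\Omega}{\Omega})$ for the unnormalized Choi operator, every constant channel $\Xi(\cdot)=\Tr[\cdot]\,\sigma$ has Choi operator $\mbI_A\otimes\sigma$ with $\sigma$ a state, so the defining inequality $\mN\leq 2^s\mL$ in \eqref{eq:Dmax def} becomes $J_\mN\leq 2^s(\mbI_A\otimes\sigma)$. Substituting $\tau:=2^s\sigma$ (so $\tau\geq 0$ and $\Tr\tau=2^s$) turns the optimization into the linear SDP
\bann
 1+R(\mN)=2^{\mfD_{\max}(\mN)}=\min\lset\Tr\tau\sbar \tau\geq 0,\ \mbI_A\otimes\tau\geq J_\mN\rset.
\eann
Because $\mfD_{\max}=\log(1+R)$, additivity is equivalent to the multiplicativity $1+R(\mN_1\otimes\mN_2)=(1+R(\mN_1))(1+R(\mN_2))$, which I would prove as two matching inequalities.

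For the upper bound I would take primal optimizers $\tau_1,\tau_2$ with $\Tr\tau_i=1+R(\mN_i)$ and check that $\tau_1\otimes\tau_2$ is feasible for $\mN_1\otimes\mN_2$. Using $J_{\mN_1\otimes\mN_2}=J_{\mN_1}\otimes J_{\mN_2}$ (after the routine reordering of systems) together with the elementary fact that $X\geq Y\geq 0$ and $X'\geq Y'\geq 0$ imply $X\otimes X'\geq Y\otimes Y'$ — seen from $X\otimes X'-Y\otimes Y'=X\otimes(X'-Y')+(X-Y)\otimes Y'$ — one gets $\mbI_{A_1A_2}\otimes(\tau_1\otimes\tau_2)\geq J_{\mN_1\otimes\mN_2}$, whence $1+R(\mN_1\otimes\mN_2)\leq\Tr\tau_1\,\Tr\tau_2=(1+R(\mN_1))(1+R(\mN_2))$. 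For the reverse inequality I would pass to the Lagrange dual of the SDP, which after eliminating $\tau\geq 0$ reads
\bann
 1+R(\mN)=\max\lset\Tr[WJ_\mN]\sbar W\geq 0,\ \Tr_A W\leq \mbI_B\rset,
\eann
with strong duality guaranteed by Slater's condition (e.g.\ $\tau=c\,\mbI_B$ for large $c$ is strictly primal-feasible). Tensoring optimal dual witnesses $W_1,W_2$ then yields a feasible point for $\mN_1\otimes\mN_2$, since $W_1\otimes W_2\geq 0$ and $\Tr_{A_1A_2}(W_1\otimes W_2)=(\Tr_{A_1}W_1)\otimes(\Tr_{A_2}W_2)\leq\mbI_{B_1}\otimes\mbI_{B_2}$, so that $1+R(\mN_1\otimes\mN_2)\geq\Tr[(W_1\otimes W_2)(J_{\mN_1}\otimes J_{\mN_2})]=(1+R(\mN_1))(1+R(\mN_2))$. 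Combining the two bounds and taking logarithms yields the proposition.

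The only genuinely nontrivial step is the superadditivity direction: from the primal minimization alone one cannot exclude that a \emph{correlated} choice of output state $\sigma$ on $B_1B_2$ beats all product choices, so the dual SDP and its strong duality are essential there. The subadditivity direction, the identity $J_{\mN_1\otimes\mN_2}=J_{\mN_1}\otimes J_{\mN_2}$, the positivity-of-tensor lemma, and the submultiplicativity of the constraint $\Tr_A W\leq\mbI_B$ under tensoring are all routine. I would therefore concentrate the effort on deriving the dual and exhibiting a Slater point; an alternative route through the discrimination characterization of $1+R$ in Theorem~\ref{thm:discrimination robustness constant} is possible but makes the product structure no easier to control.
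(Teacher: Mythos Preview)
Your proposal is correct and follows essentially the same route as the paper: subadditivity by tensoring primal optimizers and superadditivity by tensoring dual optimizers of the same SDP. The only cosmetic differences are that the paper packages the primal step as the additivity of $D_{\max}$ on product Choi states and writes the dual constraint as $\|\Tr_A Y\|_\infty\leq 1$ (invoking multiplicativity of the operator norm), which are exactly your positivity-of-tensor lemma and your $\Tr_A W\leq\mbI_B$ tensoring, respectively.
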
 

\begin{proof}
Notice that 
\ba
 \mfD_{\max}(\mN)=\min_{\mM\in\mfF}D_{\max}(\mN||\mM)
 \label{eq:dmax def min}
\ea
where $D_{\max}(\mN||\mM)=D_{\max}(J_\mN||J_\mM)=\min\{s|J_\mN\leq 2^s J_\mM\}$ with $J_\mN$, $J_\mM$ being the Choi matrices for channels $\mM$, $\mN$. 
Let $\mM_1,\mM_2\in\mfF$ be the constant channels satisfying $\mfD_{\max}(\mN_1)=D_{\max}(\mN_1||\mM_1)$, $\mfD_{\max}(\mN_2)=D_{\max}(\mN_2||\mM_2)$.
Noting that $\mM_1\otimes\mM_2$ is also a constant channel, we get
\ba
 \mfD_{\max}(\mN_1\otimes\mN_2)\leq D_{\max}(\mN_1\otimes\mN_2||\mM_1\otimes\mM_2)=D_{\max}(\mN_1||\mM_1)+D_{\max}(\mN_2||\mM_2)=\mfD_{\max}(\mN_1)+\mfD_{\max}(\mN_2).
\ea
where in the first equality we used the additivity of $D_{\max}$ for product (Choi) states. 
To show the superadditivity, consider the dual form of the max-relative entropy measure that can be obtained by a standard technique of convex optimization~\cite{boyd_2004} (see also~\cite{Takagi2019general}). 
For any channel $\mN\in \mT(A,B)$, the max-relative entropy of communication is evaluated by
\ba
 \mfD_{\max}(\mN)&=&\max\lset \log\Tr[Y J_\mN] \sbar Y\geq 0,\ \Tr[Y(\mbI\otimes \sigma)]\leq 1,\ \forall \sigma\rset\\
 &=&\max\lset \log\Tr[Y J_\mN] \sbar Y\geq 0,\ \|Y_B\|_\infty\leq 1\rset \label{eq:dmax dual2}
\ea
where $J_\mN$ is the Choi matrix for $\mN$, and $Y_B:=\Tr_A[Y]$. 
Let $Y_1$ and $Y_2$ be optimal solutions for $\mN_1$ and $\mN_2$. 
Due to the multiplicativity of the operator norm, $\|Y_{1,B}\|_\infty\leq 1$ and $\|Y_{2,B}\|_\infty\leq 1$ imply $\|\Tr_A[Y_1\otimes Y_2]\|_\infty=\|Y_{1,B}\otimes Y_{2,B}\|_\infty=\|Y_{1,B}\|_\infty\|Y_{2,B}\|_\infty\leq 1$. Thus, $Y_1\otimes Y_2$ is a valid solution for $\mN_1\otimes \mN_2$, which proves $\mfD_{\max}(\mN_1\otimes \mN_2)\geq \mfD_{\max}(\mN_1)+\mfD_{\max}(\mN_2)$. 
\end{proof}

From \eqref{eq:dmax def min}, it can be seen that the max-relative entropy of communication coincides with the conditional min-entropy of the Choi matrix of the channel~\cite{Duan2016nosignalling}.
Hence, Proposition~\ref{pro:additivity} can also be understood as the additivity property of the conditional min-entropy shown in Ref.~\cite{Konig2009operational}.

It is also worth noting that the max-relative entropy of communication coincides with the quantity known as max-information for channels introduced in Ref.~\cite{Fang2018maxinfov2}. 
This has been shown in Ref.~\cite{fang2018thesis} --- we attach the proof here for completeness.

\begin{lem} \label{lem:dmax Imax}
 Consider the max-information defined for any channel $\mN\in\mT(A,B)$ by $I_{\max}(\mN):=I_{\max}(\id_A\otimes\mN(\Phi_{AA}))$ where $\Phi_{AA}$ is the maximally entangled state on $A$ and $A'\cong A$ (which we simply write $AA$), and  $I_{\max}(\rho_{AB}):=\min_{\sigma}D_{\max}(\rho_{AB}||\rho_A\otimes \sigma)$ is the max-information defined for states~\cite{Berta2011reverse}. Then, it holds that $\mfD_{\max}(\mN) = I_{\max}(\mN)$.
\end{lem}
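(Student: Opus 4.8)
The plan is to reduce both $\mfD_{\max}(\mN)$ and $I_{\max}(\mN)$ to the same minimization of a max-relative entropy against operators of the form $\mbI_A\otimes\sigma$, and then conclude by the scale-invariance of $D_{\max}$.

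First I would rewrite the left-hand side using \eqref{eq:dmax def min}, namely $\mfD_{\max}(\mN)=\min_{\mM\in\mfF}D_{\max}(J_\mN\|J_\mM)$. The key observation is that a constant channel $\mM$ with $\mM(\rho)=\sigma$ for all $\rho$ has Choi matrix $J_\mM=\mbI_A\otimes\sigma$, since by linearity $\mM(\ketbra{i}{j})=\delta_{ij}\sigma$. As $\mM$ ranges over $\mfF$ its output $\sigma$ ranges over all states, so $\mfD_{\max}(\mN)=\min_\sigma D_{\max}(J_\mN\|\mbI_A\otimes\sigma)$.

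Next I would analyze the right-hand side. Writing $\rho_{AB}:=\id_A\otimes\mN(\Phi_{AA})$ with $\Phi_{AA}$ the normalized maximally entangled state, we have $\rho_{AB}=J_\mN/d_A$ where $d_A=\dim A$. The marginal is $\rho_A=\Tr_B[\rho_{AB}]=\mbI_A/d_A$: this holds because $\mN$ is trace-preserving, so $\Tr_B[J_\mN]=\mbI_A$, equivalently because the reduced state of the maximally entangled state is maximally mixed. Substituting into the defining formula $I_{\max}(\mN)=\min_\sigma D_{\max}(\rho_{AB}\|\rho_A\otimes\sigma)$ gives $I_{\max}(\mN)=\min_\sigma D_{\max}(d_A^{-1}J_\mN\,\|\,d_A^{-1}\mbI_A\otimes\sigma)$.

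The proof is then completed by the scale-invariance $D_{\max}(cX\|cY)=D_{\max}(X\|Y)$ for $c>0$, which is immediate from $X\leq 2^s Y\iff cX\leq 2^s cY$. Applying this with $c=d_A^{-1}$ cancels the normalization factors and yields $I_{\max}(\mN)=\min_\sigma D_{\max}(J_\mN\|\mbI_A\otimes\sigma)=\mfD_{\max}(\mN)$. There is no substantive obstacle; the only thing requiring care is the bookkeeping of the Choi normalization convention (whether $J_\mN$ is normalized), but any such factor is absorbed by scale-invariance, so the identity holds regardless. The genuinely useful content is the two structural identifications — that constant channels have Choi matrices $\mbI_A\otimes\sigma$, and that the maximally entangled input forces the $A$-marginal to be maximally mixed — which together make the two quantities literally the same optimization.
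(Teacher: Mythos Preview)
Your proof is correct and follows essentially the same approach as the paper's: both rewrite $\mfD_{\max}$ as $\min_\sigma D_{\max}(J_\mN\|\mbI_A\otimes\sigma)$ via the Choi-matrix characterization of constant channels, then identify this with $I_{\max}(\mN)$ using $\rho_A=\mbI_A/d_A$ and the scale-invariance of $D_{\max}$. The paper's version is just a terse chain of equalities, while you spell out the normalization bookkeeping more carefully.
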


\begin{proof}
Rewriting the definition of max-relative entropy, we get 
\ba
 \mfD_{\max}(\mN)&=&\min_{\mM\in\mfF} D_{\max}(\mN||\mM)\\
 &=& \min_{J_\mM\in\mfF} D_{\max}(J_\mN||J_\mM)\\
 &=& \min_{\sigma} D_{\max}(J_\mN||\mbI_A\otimes \sigma)\\
 &=& \min_{\sigma} D_{\max}(\id\otimes \mN (\Phi)||\mbI_A/d_A\otimes \sigma)\\
 &=& I_{\max}(\mN).
\ea
\end{proof}

Next, we present the tight upper bound for this measure and show that it is achieved by the reversible channels. 
\begin{pro} \label{pro:dmax upper bound}
Let $\mN\in\mT(A,B)$ and $d_A$ be the dimension of the underlying Hilbert space in system $A$. Then, it holds that $\mfD_{\max}(\mN)\leq 2\log d_A$, and the equality is achieved if and only if $\mN$ is reversible.  
\end{pro}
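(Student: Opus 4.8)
The plan is to work from the characterization $\mfD_{\max}(\mN)=\min_{\mM\in\mfF}D_{\max}(J_\mN\|J_\mM)$ in \eqref{eq:dmax def min}, using that the Choi matrix of a constant channel is exactly $\mbI_A\otimes\tau$ for a state $\tau$, so that $\mfD_{\max}(\mN)=\min_{\tau}D_{\max}(J_\mN\|\mbI_A\otimes\tau)$, where $J_\mN\ge 0$ satisfies $\Tr_B J_\mN=\mbI_A$ and $\Tr J_\mN=d_A$. For the inequality I would exhibit the explicit feasible choice $\tau=\rho_B:=\Tr_A J_\mN/d_A=\mN(\mbI_A/d_A)$, for which the claim $\mfD_{\max}(\mN)\le 2\log d_A$ reduces to the single operator inequality $J_\mN\le d_A^2\,\mbI_A\otimes\rho_B=d_A\,\mbI_A\otimes(\Tr_A J_\mN)$.

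The heart of the argument is this operator inequality, which I would prove in the equivalent form $\|X\|_\infty\le d_A$, where $X:=(\mbI_A\otimes(\Tr_A J_\mN)^{-1/2})\,J_\mN\,(\mbI_A\otimes(\Tr_A J_\mN)^{-1/2})\ge 0$ satisfies $\Tr_A X=\mbI_B$ on its support. Expanding a candidate top eigenvector as $\ket{\phi}=\sum_a\ket{a}_A\otimes\ket{\phi_a}_B$ in an orthonormal basis of $A$ and setting $\ket{w_a}:=X^{1/2}(\ket{a}_A\otimes\ket{\phi_a}_B)$, the triangle and Cauchy--Schwarz inequalities give $\bra{\phi}X\ket{\phi}=\|\sum_a\ket{w_a}\|^2\le(\sum_a\|w_a\|)^2\le d_A\sum_a\|w_a\|^2$. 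Since $\|w_a\|^2=\bra{\phi_a}X_{aa}\ket{\phi_a}$ with the diagonal blocks $X_{aa}=(\bra{a}\otimes\mbI_B)X(\ket{a}\otimes\mbI_B)\ge 0$ obeying $X_{aa}\le\sum_{a'}X_{a'a'}=\Tr_A X=\mbI_B$, the last sum is at most $\sum_a\langle\phi_a|\phi_a\rangle=1$, whence $\|X\|_\infty\le d_A$. Substituting back yields the bound. Equivalently, combining Lemma~\ref{lem:dmax Imax} with the definition of the conditional min-entropy gives $\mfD_{\max}(\mN)=\log d_A-H_{\min}(A|B)_{\hat\rho}$ for the normalized Choi state $\hat\rho:=J_\mN/d_A$, and the estimate above is exactly $H_{\min}(A|B)_{\hat\rho}\ge -\log d_A$.

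For the equality case, the ``if'' direction uses monotonicity: if $\mN$ is reversible there is a channel $\mR$ with $\mR\circ\mN=\id_{d_A}$, and post-composition $\Theta[\,\cdot\,]=\mR\circ(\,\cdot\,)$ sends constant channels to constant channels, hence $\Theta\in\mO_\mfF$. Monotonicity of $\mfD_{\max}$ under free superchannels then gives $\mfD_{\max}(\mN)\ge\mfD_{\max}(\id_{d_A})$, while evaluating the dual \eqref{eq:dmax dual2} at $Y=\dm{\Omega}$ (feasible since $\Tr_A\dm{\Omega}=\mbI_B$) gives $\mfD_{\max}(\id_{d_A})\ge\log\Tr[\dm{\Omega}\,\dm{\Omega}]=2\log d_A$, so equality holds. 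For the ``only if'' direction I would use the identity $\mfD_{\max}(\mN)=\log d_A-H_{\min}(A|B)_{\hat\rho}$: equality forces $H_{\min}(A|B)_{\hat\rho}=-\log d_A$, the minimal value of the conditional min-entropy. By its operational characterization, this extremal value is attained exactly when $A$ is perfectly recoverable from $B$, i.e.\ when some channel $\mR:B\to A$ satisfies $(\id_A\otimes\mR)(\hat\rho_{AB})=\Phi_{AA}$; since $\hat\rho_{AB}$ is the normalized Choi state of $\mN$, the Choi--Jamio\l{}kowski isomorphism turns this into $\mR\circ\mN=\id_A$, i.e.\ $\mN$ is reversible.

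The main obstacle is the ``only if'' direction: reducing the extremal identity $H_{\min}(A|B)_{\hat\rho}=-\log d_A$ to the existence of a perfect recovery map is the least elementary step, and it is the one I would be most careful to justify --- either by quoting the operational meaning of the conditional min-entropy, or, failing that, by tracking the equality conditions through the Cauchy--Schwarz chain above together with the optimization over $\tau$. By comparison, the upper bound is routine once the block estimate $\|X\|_\infty\le d_A$ is in place.
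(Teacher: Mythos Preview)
Your argument is correct, but it takes a different route from the paper. The paper works entirely through the dual \eqref{eq:dmax dual2}: interpreting a feasible $Y_{AB}$ with $\Tr_A Y_{AB}=\mbI_B$ as the Choi matrix of a unital map $\mE$ and passing to its adjoint $\mF=\mE^\dagger$ (which is a channel $B\to A$), they obtain the single identity
\[
1+R(\mN)=d_A^2\max_{\mF}\bra{\Phi_{AA}}(\id\otimes\mF\circ\mN)(\Phi_{AA})\ket{\Phi_{AA}},
\]
from which both the bound $1+R(\mN)\le d_A^2$ and the equality condition $\exists\,\mF:\mF\circ\mN=\id_A$ drop out at once. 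By contrast, you split the work: for the bound you stay on the primal side, exhibit the concrete feasible $\tau=\Tr_A J_\mN/d_A$, and prove the block estimate $\|X\|_\infty\le d_A$ by a Cauchy--Schwarz/triangle argument; for the ``if'' direction you use monotonicity plus an evaluation of $\mfD_{\max}(\id_{d_A})$ via the dual; and for the ``only if'' direction you invoke the K\"onig--Renner--Schaffner operational characterization of $H_{\min}$. The amusing point is that the paper's identity \emph{is} precisely the KRS result specialized to the Choi state, so your ``only if'' step cites externally the very fact the paper rederives. What your approach buys is an elementary, dual-free proof of the inequality; what the paper's approach buys is a self-contained treatment of the equality case and the bonus operational interpretation of $R(\mN)$ as the optimal recovered singlet fraction.
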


\begin{proof}
We first argue that \eqref{eq:dmax dual2} can be rewritten (in terms of the generalized robustness) as
\ba
 1+R(\mN)&=&\max \lset \Tr[J_\mN Y_{AB}] \sbar \Tr_{A}[Y_{AB}]\leq \mbI_B,\ Y_{AB}\geq 0\rset\\
 &=& \max \lset \Tr[J_\mN Y_{AB}] \sbar \Tr_{A}[Y_{AB}]= \mbI_B,\ Y_{AB}\geq 0\rset
 \label{eq:robustness dual equality}
\ea
where the second equality is shown as follows: if $\Tr_{A}[Y_{AB}]\neq \mbI_B$, there exists a positive semidefinite operator $Q_B\geq 0$ such that $\mbI_B-\Tr_{A}[Y_{AB}]=Q_B$. 
Then, it is straightforward to check that another operator $Y_{AB}':= Y_{AB}+\sigma\otimes Q_B$ where $\sigma\geq 0$ and $\Tr[\sigma]=1$ satisfies $\Tr_{A}[Y_{AB}']=\mbI_B$, $Y_{AB}'\geq 0$, and $\Tr[J_N Y_{AB}']\geq \Tr[J_N Y_{AB}]$ since $J_N\geq 0$ and $Y_{AB}'-Y_{AB}=\sigma\otimes Q_B \geq 0$. 

Let $Y_{AB}$ be a dual solution in \eqref{eq:robustness dual equality}. 
Noting $\Tr_A Y_{AB} = \mbI_B$, we take a unital map $\mE:=J^{-1}(Y_{AB}):A\to B$, which has $Y_{AB}$ as its Choi matrix. Furthermore, let $\mF = \mE^\dagger:B
\to A$ be the adjoint quantum channel of $\mE$. Then writing $\Gamma_{AA}:=d_A\dm{\Phi_{AA}}=\sum\ketbra{ii}{jj}$ be the unnormalized maximally entangled state acting on $A$ and $A'\cong A$ (which we simply write $A$), we get 
\begin{align}
1+R(\mN) &= \max_{Y_{AB} \geq 0,\Tr_AY_{AB} = \mbI_B} \Tr\left[J_\mN Y_{AB}\right] \\
&=  \max_{\mE_{A\to B}}
    \Tr\left[ J_\mN (\id_{A}\otimes\mE_{A\to B}) (\Gamma_{AA}) \right] \\
&=  \max_{\mE_{A\to B}}
    \Tr\left[ (\id_{A}\otimes\mE_{A\to B})^\dagger (J_{\mN})\, \Gamma_{AA} \right] \\
&=  \max_{\mF_{B\to A}}
    \Tr\left[ (\id_{A}\otimes\mF_{B \to A})(J_{\mN})\, \Gamma_{AA} \right] \\
&=  d_A\max_{\mF_{B\to A}}
    \bra{\Phi_{AA}}\left[(\id_{A}\otimes\mF_{B \to A})(J_{\mN})\right]\ket{\Phi_{AA}} \\
&=  d_A^2\max_{\mF_{B\to A}}
    \bra{\Phi_{AA}}\left[(\id_{A}\otimes\mF_{B \to A})(\id_{A}\otimes\mN_{A \to B})(\Phi_{AA})\right]\ket{\Phi_{AA}} \\
&=  d_A^2\max_{\mF_{B\to A}}\bra{\Phi_{AA}}\mF\circ\mN(\Phi_{AA})\ket{\Phi_{AA}}.
    \label{eq:optimal-overlap}
\end{align}

It is now clear from the expression that we have $1+R(\mN)\leq d_A^2$, and the maximum is achieved if and only if $\mN$ is reversible.

\end{proof}

Note that a compatible upper bound can be deduced by the upper bound on the max-information of the states, $I_{\max}(\rho_{AB})\leq 2\log\min\{|A|,|B|\}$~\cite{Berta2011reverse}, together with Lemma~\ref{lem:dmax Imax}.
The above argument further provides an operational meaning to $R(\mN)$ as the ability of $\mN$ to preserve the maximally entangled state in the sense that once $\Phi_{AA}$ is destroyed by $\mN$, how well we can recover $\Phi_{AA}$ from the destroyed state.

%%%%%%%%%%%%%%%%%%%%%%%%%%%%%%%%%%%%%%%%%%%%%%%%%%%%%%%%%%%%%%%%
\section{Proof of Proposition \ref{pro:FS = NS}} \label{sec:FS=NS}

\begin{proof}
To see $\mO_{\rm NS}\subseteq \mO_{\mfF}$, suppose to the contrary that there exists $\Pi_{\rm NS}$ that maps constant channel $\mN_\sigma\in\mT(A_o,B_i)$, which always outputs a fixed state $\sigma$, to some non-constant channel $\mM\in\mT(A_i,B_o)$. 
Since $\mM$ is non-constant, there exists a pair of states $\rho_{A_i}^{(0)}$ and $\rho_{A_i}^{(1)}$ such that $\mM(\rho_{A_i}^{(0)})\neq\mM(\rho_{A_i}^{(1)})$. 
However, this implies that $\Tr_{A_o}\Pi_{\rm NS}(\rho_{A_i}^{(0)}\otimes \sigma)\neq \Tr_{A_o}\Pi_{\rm NS}(\rho_{A_i}^{(1)}\otimes \sigma)$, which violates \eqref{eq:AtoB non-signalling}. Thus, it must be the case that $\mO_{\rm NS}\subseteq \mO_{\mfF}$.

To see the other inclusion $\mO_{\mfF}\subseteq \mO_{\rm NS}$, note first that because of the causal structure equipped with $\mO_\mfF$ as a set of superchannels, $B\rightarrow A$ non-signalling condition \eqref{eq:BtoA non-signalling} is always ensured.
Thus, it suffices to show that any superchannel realized by an $A\rightarrow B$ signalling operation is also outside of $\mO_\mfF$.
In particular, we show that whenever the bipartite operation is $A\rightarrow B$ signalling, there always exists a constant channel that is transformed to non-constant channel by this transformation. 
Let $\Pi$ be a $A\rightarrow B$ signalling operation that violates \eqref{eq:AtoB non-signalling}.  
Let $\rho_{A_i}^{(0)}$, $\rho_{A_i}^{(1)}$, and $\sigma_{B_i}$ be the states such that $\Tr_{A_o}\Pi(\rho_{A_i}^{(0)},\sigma_{B_i})\neq \Tr_{A_o}\Pi(\rho_{A_i}^{(1)},\sigma_{B_i})$. 
Consider the constant channel $\mN_{\sigma_{B_i}}\in\mT(A_o,B_i)$, which always outputs $\sigma_{B_i}$, and let $\mN'$ be the channel transformed from $\mN_{\sigma_{B_i}}$ by this operation. 
Then, we have $\mN'(\rho_{A_i}^{(0)})\neq\mN'(\rho_{A_i}^{(1)})$, which means that $\mN'$ is not a constant channel. 
Hence, any superchannel outside $\mO_{\rm NS}$ cannot be a member of $\mO_\mfF$, concluding the proof. 

\end{proof}

%%%%%%%%%%%%%%%%%%%%%%%%%%%%%%%%%%%%%%%%%%%%%%%%%%%%%%%%%%%%%%%%%%
\section{Proof of Lemma \ref{lem:monotonicity smoothed Dmax}} \label{sec:app converse alternative}

\begin{proof}

We first show that the diamond norm is contractive under any action of superchannel. Let $\mN,\mM\in\mT(A,B)$ be two channels and $\Xi\in\mS(\{A,B\},\{A',B'\})$ be a superchannel defined by $\Xi[\mN]=\mE_{\rm post}\circ(\id_E\otimes\mN)\circ\mE_{\rm pre}$. Also, let $\tilde{\rho}$ be a state on system $RA'$ that achieves the diamond norm of $\Xi[\mN]-\Xi[\mM]$. Then,
\ba
 \|\Xi[\mN]-\Xi[\mM]\|_\diamond &=& \|\id_R\otimes\Xi[\mN](\tilde{\rho})-\id_R\otimes\Xi[\mM](\tilde{\rho})\|_1\\
 &=& \|\id_R\otimes\left[\mE_{\rm post}\circ(\id_E\otimes\mN)\circ\mE_{\rm pre}\right](\tilde{\rho})-\id_R\otimes\left[\mE_{\rm post}\circ(\id_E\otimes\mM)\circ\mE_{\rm pre}\right](\tilde{\rho})\|_1\\
 &\leq& \|\id_{RE}\otimes\mN(\tilde{\rho}')-\id_{RE}\otimes\mM(\tilde{\rho}')\|_1\\
 &\leq& \|\mN-\mM\|_\diamond
\ea
where on the first inequality we set $\tilde{\rho}':=\id_R\otimes\mE_{\rm pre}(\tilde{\rho})$ and also used the contractivity of the trace norm under CPTP maps. 

Let us now take a channel $\tilde{\mN}$ such that $\|\tilde{\mN}-\mN\|_\diamond\leq\epsilon$ and $\mfD_{\max,\mfF'}(\tilde{\mN})=\mfD_{\max,\mfF'}^\epsilon(\mN)$. We then get
\ba
 \mfD_{\max,\mfF'}^\epsilon(\mN) = \mfD_{\max,\mfF'}(\tilde{\mN}) \geq 
 \mfD_{\max,\mfF'}(\Theta[\tilde{\mN}]) \geq 
  \mfD_{\max,\mfF'}^\epsilon(\Theta[\mN])
\ea
where the first inequality is due to the monotonicity of $\mfD_{\max,\mfF'}$ under free superchannels, and the second inequality is because $\|\Theta[\tilde{\mN}]-\Theta[\mN]\|_\diamond\leq \|\tilde{\mN}-\mN\|_\diamond\leq \epsilon$ due to the contractivity of the diamond norm under superchannels shown above.  
\end{proof}

As can be seen from the proof, the monotonicity can be generalized to any other smoothing with respect to a contractive distance measure.

%%%%%%%%%%%%%%%%%%%%%%%%%%%%%%%%%%%%%%%%%%%%%%%%%%%%%%%%%%%%%%

\section{Proof of Theorem \ref{thm:discrimination robustness constant}} \label{sec:app discrimination proof}

\begin{proof}

We first show that ${\rm l.h.s.}\leq {\rm r.h.s.}$ in \eqref{eq:discrimination}. To this end, we show a more general statement which holds for general choice of free channels $\mfF'$: for any state ensemble $\mA$,
\bal
  \frac{p_{\rm succ}(\mA,\id_E\otimes\mN)}{\max_{\Xi\in\mfF'}p_{\rm succ}(\mA,\id_E\otimes\Xi)} \leq 1+R_{\mfF'}(\mN)
  \label{eq:discrimination general}.
\eal
By definition of $R_{\mfF'}(\mN)$, there exists a free channel $\Xi\in\mfF'$ and some channel $\mL$ such that $\mN = (1+R_{\mfF'}(\mN))\Xi - R_{\mfF'}(\mN)\mL$. 
Let $\{M_i'\}$ be the optimal measurement for the numerator of the l.h.s in \eqref{eq:discrimination general}. Then, for any ensemble $\mA=\{p_i,\sigma_i^{EA}\}$,
\bal
 p_{\rm succ}(\mA,\id_E\otimes\mN) &= \sum_j p_j \Tr[\id_E\otimes\mN (\sigma_j^{EA}) M_j'] \\
 &= (1+R_{\mfF'}(\mN))\sum_j p_j \Tr[\id_E\otimes\Xi(\sigma_j^{EA}) M_j'] - R_{\mfF'}(\mN)\sum_j p_j \Tr[\id_E\otimes\mL(\sigma_j^{EA}) M_j']\\
 &\leq (1+R_{\mfF'}(\mN))\sum_j p_j \Tr[\id_E\otimes\Xi(\sigma_j^{EA}) M_j'] \\
 &\leq (1+R_{\mfF'}(\mN))\max_{\{M_i\}} \max_{\Xi\in\mfF'}\sum_j p_j \Tr[\id_E\otimes\Xi(\sigma_j^{EA}) M_j]\\
 &= (1+R_{\mfF'}(\mN))\max_{\Xi\in\mfF'} p_{\rm succ}(\mA,\id_E\otimes\Xi),
 \label{eq:general upper bound proof}
\eal
which proves \eqref{eq:discrimination general}. 
In addition, when $\mA\in \mathfrak{E}$ and $\mfF'$ is the set of constant channels, the denominator of the l.h.s. of \eqref{eq:discrimination general} becomes 
\bal
 \max_{\Xi\in\mfF'}p_{\rm succ}(\mA,\id_E\otimes\Xi) &= \max_{\{M_i\}}\max_{\tau}\sum_j p_j\Tr[\sigma_j^E\otimes\tau M_j]\\
 &= p_{\rm guess}(\mA)
 \label{eq:p guess}
\eal 
where $\sigma_j^E=\Tr_A[\sigma_j^{EA}]$, and the second equality is due to the condition on $\mathfrak{E}$. 
This concludes the first part of the proof of \eqref{eq:discrimination}. 

To show the other inequality, take $E\cong A$ and consider the ensemble $\mA'$ defined on the system $EA$.  
Take $p_j=1/{d_A^2}$, $\sigma_j^{EA}=(P_j\otimes\id_A )\Phi^{EA}(P_j\otimes\id_A )$ for $j=0,\dots,d_A^2-1$ where $\Phi^{EA} = \frac{1}{d_A}\sum_{ik}\ketbra{ii}{kk}$ and $P_j$ is the $j$\,th Pauli operator. 
Let us also consider the operator $M_j'=\frac{1}{d_A}(P_j\otimes\id_B)Y_{EB}(P_j\otimes\id_B)$ where $Y_{EB}$ is an optimal solution in \eqref{eq:robustness dual equality}. 
One can check that $\{M_j'\}$ satisfies the condition for a POVM measurement since clearly $M_j'\geq 0$ because $Y_{EB}\geq 0$ and $P_j$ are unitary, and $\sum_j M_j' = \mbI_{E}\otimes \mbI_{B}$ since the random application of the Pauli operators serves as the completely depolarizing channel, which gives $\sum_j (P_j\otimes \id_B)Y_{EB}(P_j\otimes \id_B)=d_A\mbI_E\otimes \Tr_E[Y_{EB}]$ and $\Tr_E[Y_{EB}]=\mbI_B$ due to \eqref{eq:robustness dual equality}.
Defining $\mP_j(\cdot):=P_j\cdot P_j$, we obtain
\ba
 p_{\rm succ}(\mA',\id_{E}\otimes\mN ,\{M_j'\})&=& \sum_{j=0}^{d_A^2-1}\frac{1}{d_A^2}\Tr\left[\mP_j\otimes\mN \left(\Phi^{EA}\right) \frac{1}{d_A}\mP_j\otimes\id_B(Y_{EB})\right]\\
 &=& \sum_{j=0}^{d_A^2-1}\frac{1}{d_A^3}\Tr\left[\id_{E}\otimes\mN \left(\Phi^{EA}\right) Y_{EB}\right] \\
 &=& \frac{1}{d_A^2}\Tr\left[J_\mN Y_{EB}\right]\\
 &=& \frac{1+R(\mN)}{d_A^2}
\ea
where in the second equality Pauli operators are canceled out in the middle and we also used the cyclic property of the trace, and in the third equality we used $\frac{1}{d_A}J_\mN=\id_{E}\otimes\mN(\Phi^{EA})$. 
Since $\mA'\in \mathfrak{E}$, together with \eqref{eq:p guess} and $p_{\rm guess}(\mA')=\frac{1}{d_A^2}$, we get 
\ba
 \max_{\mA\in\mathfrak{E}}\frac{p_{\rm succ}(\mA,\id_{E}\otimes\mN)}{p_{\rm guess}(\mA)}\geq \frac{p_{\rm succ}(\mA',\id_{E}\otimes\mN)}{p_{\rm guess}(\mA')} = 1+R(\mN),
\ea
which completes the proof of \eqref{eq:discrimination}.

Finally, \eqref{eq:discrimination2} can be shown by following the same steps as \eqref{eq:general upper bound proof} while replacing $\id_E\otimes \mN$ ($\id_E\otimes \Xi$) with $\mN$ ($\Xi$) and noting $\max_{\Xi\in\mfF}p_{\rm succ}(\mA,\Xi)=p_{\rm guess}(\mA)$.

\end{proof}

%%%%%%%%%%%%%%%%%%%%%%%%%%%%%%%%%%%%%%%%%%%%%%%%%%%%%%%%%%%%%%%%%%%
\section{Proof of Lemma \ref{lem:guess diamond}} \label{sec:guess diamond}
\begin{proof}
We assume $p_{\rm succ}(\mA,\mM)\geq p_{\rm succ}(\mA,\mL)$ without loss of generality. Then,   
 \ba
  \left|p_{\rm succ}(\mA,\mM)-p_{\rm succ}(\mA,\mL)\right|&=& \max_{\{M_i'\}}\sum_j p_j \Tr[M_j' \mM(\sigma_j)]- \max_{\{M_i\}}\sum_j p_j \Tr[M_j \mL(\sigma_j)]\\
  &\leq& \max_{\{M_i'\}}\sum_j p_j \Tr[M_j' (\mM-\mL)(\sigma_j)]\\
  &\leq& \sum_j p_j \frac{1}{2}\|\mM-\mL\|_\diamond = \frac{1}{2}\|\mM-\mL\|_\diamond
 \ea
where on the first inequality we set $\{M_i\}$ to be the optimal measurement for the first term, and on the second inequality we used that for any state $\rho$ and POVM element $M_j$, it holds that
\ba
\|\mM-\mL\|_\diamond\geq \|(\mM-\mL)(\rho)\|_1 = \max_{0\leq P \leq \mbI}2\Tr[P(\mM-\mL)(\rho)]\geq 2\Tr[M_j(\mM-\mL)(\rho)].
\ea
\end{proof}

%%%%%%%%%%%%%%%%%%%%%%%%%%%%%%%%%%%%%%%%%%%%%%%%%%
\section{Proof of Theorem~\ref{thm:dilution}} \label{sec:app simulation cost}

\begin{proof}
Let $\mN'$ be a channel that satisfies $\mfD_{\max}(\mN')=\mfD_{\max}^\epsilon(\mN)$ and $\|\mN'-\mN\|_\diamond\leq\epsilon$. If $k^2\geq 2^{\mfD_{\max}^\epsilon(\mN)}$, there exists a channel $\mL$ such that $\frac{\mN' + (k^2-1)\mL}{k^2}\in \mfF$. Consider the following superchannel $\Theta$ defined as
\ba
 \Theta[\Lambda]&:=&\frac{\Tr[J_{\id_k}J_\Lambda]}{k^2}\mN' + \frac{\Tr[(k\mbI-J_{\id_k})J_\Lambda]}{k^2}\mL 
\ea
where $J_\Lambda$ denotes the Choi matrix for channel $\Lambda$. 
It can be seen that $\Theta\in\mO_\mfF$ by considering a free channel $\Xi\in\mfF$ such that it outputs a fixed state $\Xi(\cdot)=\tau$. Then, we have $\frac{\Tr[J_{\id_k}J_\Xi]}{k^2}=\Tr[\Phi_k (\frac{\mbI}{k}\otimes\tau)]=\frac{1}{k^2}$, which ensures $\Theta[\Xi]\in\mfF$ due to the definition of $\mL$. 
Since $\Theta[\id_k]=\mN'$, this specific construction achieves the desired transformation, and thus $C_{c,(1)}^{\epsilon}(\mN)\leq \lceil2^{\frac{1}{2}\mfD_{\max}^\epsilon(\mN)}\rceil$. 

Suppose, on the other hand, there exists $\Theta\in\mO_\mfF$ such that $\Theta[\id_k]=\mN'$ with $\|\mN'-\mN\|_\diamond\leq\epsilon$. Then, we have
$\mfD_{\max}^\epsilon(\mN)\leq \mfD_{\max}(\mN')=\mfD_{\max}(\Theta[\id_k])\leq\mfD_{\max}(\id_k) = 2 \log k$
where we used the monotonicity of $\mfD_{\max}$ in the second inequality and Proposition~\ref{pro:dmax upper bound} in the last equality.
This proves $2^{\frac{1}{2}\mfD_{\max}^\epsilon(\mN)} \leq C_{c,(1)}^{\epsilon}(\mN)$, which concludes the proof.
\end{proof}

Note that the same result holds with respect to any distance measure as long as the same distance measure is used for the smoothing of the max-relative entropy measure.

We also remark that a similar argument has been employed in~\cite{Yuan2019memory} where they considered a resource theory of quantum memory and max-relative entropy measure associated with it.
A couple of important differences are: 1) Our measure is efficiently computable by SDP unlike the one in~\cite{Yuan2019memory}, for which only computable bounds can be obtained. 2) The robustness measure considered in~\cite{Yuan2019memory} is the \textit{standard} robustness~\cite{Vidal1999robustness,regula_2018} where one is restricted to mix free channels to make the mixture with the given channel free.
It can be seen that in our case the standard robustness diverges for any resourceful channel, which is a generic feature observed for affine resource theories~\cite{gour_2017}.

\end{document}